\newcommand{\squishlist}{
  \begin{list}{$\bullet$}
    { \setlength{\itemsep}{0pt}      \setlength{\parsep}{1pt}
      \setlength{\topsep}{0pt}       \setlength{\partopsep}{0pt}
     \setlength{\leftmargin}{.6em} \setlength{\labelwidth}{1.5em}
      \setlength{\labelsep}{0.5em} } }
\newcommand{\squishend}{
   \end{list}  }
\newtheorem{dfn}{Definition}[section] 
\newtheorem{prop}[dfn]{Proposition}
\newtheorem{theorem}[dfn]{Theorem}
\newtheorem{rem}[dfn]{Remark}
\newtheorem{lemma}[dfn]{Lemma}
\newtheorem{claim}[dfn]{Claim}
\newcommand{\proof}{\noindent \emph{Proof.} }
\newcommand{\qed}{\hspace*{\fill}$\square$}
\newcommand{\remove}[1]{}
\newcommand{\bw}{\operatorname{bw}}
\newcommand{\mids}{\operatorname{mid}}
  \renewcommand{\int}{\operatorname{int}}       
\begin{document}

\title{Planar Subgraph Isomorphism Revisited 
}
\author{
Frederic Dorn \\
\footnotesize Department of Informatics, University of Bergen, Norway\\
\small {\tt frederic.dorn@ii.uib.no}
}

\date{}
\maketitle

\begin{abstract}
The problem of {\sc Subgraph Isomorphism} is defined as follows:
 Given a \emph{pattern} $H$  and a  \emph{host graph} $G$ on $n$ vertices, does $G$ contain a subgraph that is isomorphic to $H$? Eppstein [SODA 95, J'GAA 99] gives the first linear time algorithm for  subgraph isomorphism for a fixed-size pattern, say of order $k$, and arbitrary planar host graph, improving upon the $O(n^{\sqrt{k}})$-time algorithm when using the ``Color-coding'' technique of Alon et al [J'ACM 95]. 
Eppstein's algorithm runs in time $k^{O(k)} n$, that is,
the dependency on $k$ is superexponential. 
We solve an open problem posed in Eppstein's paper and 
improve the running time to $2^{O(k)} n$, that is, single exponential in $k$ while keeping the term in $n$ linear. 
 Next to deciding subgraph isomorphism, we can construct a solution and count all solutions in the same asymptotic running time. We may enumerate $\omega$ subgraphs with an additive term $O(\omega k)$ in the running time of our algorithm.
We introduce the technique of ``embedded dynamic programming" on a suitably structured graph decomposition, which exploits the topology of the underlying  embeddings of the subgraph pattern (rather than of the host graph).
 To achieve our results, we give an upper bound on the number of partial solutions in each dynamic programming step  as a function of pattern size---as it turns out, for the planar subgraph isomorphism
problem, that function is single exponential in the number of vertices in the pattern. 
 \end{abstract}

\section{Introduction}\label{intro}

  \noindent In the literature, we often find results on polynomial time or even linear time algorithms for NP-hard problems. 
  Take for example the NP-complete problem of computing an optimal tree-decomposition of a graph. Bodlaender~\cite{Bodlaender96} gives an algorithm in time $O(n)$ for this problem---restricted to input graphs of constant treewidth. The  Graph Minor Theory developed by Robertson and Seymour implies amongst others that there is an $O(n^3)$ algorithm for the disjoint path problem, that is for finding disjoint paths between a constant number of terminals. 
   Taking a closer look at such results, one notices that 
    a function exponential in size of some constant $c$ is hidden in the $O$-notation of the running time---here, $c$ is the treewidth and the number of terminals, respectively. 
   In another line of research, \emph{parameterized complexity}, the primary goal is to rather find algorithms that minimize the exponential term of the running time.
   The first step here is  to prove that such an algorithm with a separate exponential function exists, that is,
   that the studied problem is  \emph{fixed parameter tractable (FPT)}~\cite{DowneyF99,FlumGrohebook,Niedermeierbook06}. Such problem has  an algorithm with time complexity  bounded by a function of the form $f(k) \cdot n^{O(1)}$, where the {\em parameter function} $f$ is a computable function only depending on $k$.  
The second step in the design of FPT-algorithms is to decrease  the growth rate of the parameter function.

 We can identify two different trends in which  running times of  exact algorithms are improved. 
 First, one can decrease the degree of the polynomial term in  the asymptotic running time,  and second,
   one can focus on  obtaining parameter functions with better exponential growth.
 In the present work, we achieve both goals for the computational problem {\sc Planar Subgraph Isomorphism}.


 
 {\sc Subgraph Isomorphism} generalizes  many important graph problems, such as {\sc Hamiltonicity}, {\sc Longest Path}, and {\sc Clique}. 
It is  known to be $NP$-complete, even when restricted to planar graphs
~\cite{GareyJ79}. Until now, the best known algorithm  to solve {\sc Subgraph Isomorphism}, that is to find a subgraph of a given host graph isomorphic to a pattern $H$ of order $k$ (the number of vertices in $H$), is the na\"{i}ve exhaustive search algorithm with running time $O(n^{k})$ and no FPT-algorithm can be expected here~\cite{DowneyF99}.
 For a pattern $H$ of treewidth at most $t$, Alon et al.~\cite{AlonYZ95} give an algorithm of running time $2^{O(k)}  n^{O(t)}$. 
 For {\sc Planar Subgraph Isomorphism}, given planar pattern and input graph, some  considerable improvements have been made mostly during the 90's. The first improvement was provided by Plehn and Voigt~\cite{PlehnV90}, with running time $O(k^{k})  n^{O(\sqrt{k})}$. Using  the elegant Color-coding technique of Alon et al.~\cite{AlonYZ95}, one can devise an algorithm of running time  $2^{O(k)}  n^{O(\sqrt{k})}$. The current benchmark has been set by Eppstein~\cite{Eppstein99} to $k^{O(k)}  n$, by employing graph decomposition methods, similar to the Baker-approach~\cite{Baker94} for approximating NP-complete problems on planar graphs. Eppstein's algorithm is actually the first FPT-algorithm for {\sc Planar Subgraph Isomorphism} with $k$ as parameter. Eppstein poses three open problems: a) whether one can extend the technique in~\cite{AlonYZ95} to improve the dependence on the size of the pattern from $k^{O(k)}$ to $2^{O(k)}$  for the decision problem of subgraph isomorphism; and whether one can achieve similar improvements b) for the counting version
and c) for the listing version of the subgraph isomorphism problem.

\paragraph{\bf Our results.}
In this work, we do not only achieve this single exponential behavior  in $k$ for all three problems---without applying the randomized coloring technique---we also  keep
  the term in $n$ linear. That is, 
we 
give an algorithm for {\sc Planar Subgraph Isomorphism} for a  pattern $H$ of order $k$ with running time $2^{O(k)}  n$. 
 Next to deciding subgraph isomorphism, we can construct a solution and count all solutions in the same asymptotic running time. We may list $\omega$ subgraphs with an additive term $O(\omega k)$ in the running time of our algorithm.
Our algorithm also improves the time complexity of the previous approach~\cite{FominT06} for patterns of size $k \in o(\sqrt{n}\log n)$.

    The novelty of our result  comes from \emph{embedded dynamic programming}, a technique we find interesting on its own.
 Here, one decomposes the graph by separating it into induced subgraphs.  In the dynamic programming step, one computes partial solutions for the separated subgraphs, that are updated to an overall solution for the whole graph.
In ordinary dynamic programming, one would  argue how the subgraph pattern hits  separators of the host graph.
 Instead, in embedded dynamic programming for subgraph isomorphism,  we proceed exactly the opposite way: we look at how separators can be routed through the subgraph pattern. As a consequence, we  bound the number of partial solutions not by a function of the separator size of the host graph, but by a function of the pattern size---as it turns out, for the planar subgraph isomorphism
problem, that function is single exponential in the number of vertices of the pattern.
To obtain a good bound on the parameter function, we apply several fundamental enumerative combinatorics results in the technical sections of this work. Next to the number of cycles and face-vertex sequences in embedded graphs, 
 these counting results give upper bounds on the number of planar triangulations 
  and planar embeddings of the  pattern.

 

 Our algorithm is divided into two parts with the second part being the aforementioned embedded dynamic programming. For keeping the time complexity of our algorithm linear in the size of the host graph, we give a fast method  for computing
  a graph decomposition with special properties:  \emph{Sphere-cut decompositions}
    are natural extensions of tree-decompositions to plane graphs, where the separator vertices are connected by a Jordan curve.  In  embedded dynamic programming we use sphere-cut decompositions with separators of size linearly bounded by the size of the subgraph pattern. 
   
 
 \begin{theorem}
\label{bigtheorem}
Let $G$ be  a planar graph  on $n$ vertices and ¨$H$ a pattern of order $k$.
We can decide if there is a subgraph of $G$ that is isomorphic to $H$ in time $2^{O(k)}  n$.
 We find subgraphs and count  subgraphs of $G$ isomorphic to $H$ in time $2^{O(k)}  n$ and enumerate $\omega$ subgraphs in time $2^{O(k)}n+O(\omega k)$.
\end{theorem}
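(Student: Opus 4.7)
The proof combines three ingredients already hinted at in the introduction: a Baker-style layering of the host graph, sphere-cut decompositions of each layer, and an embedded dynamic programming routine whose tables have single-exponential size.

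First I would reduce to slabs of bounded branchwidth. Any subgraph of $G$ isomorphic to $H$ has at most $k$ vertices and thus lies within $k$ consecutive BFS layers of $G$. Accordingly, I slice $G$ along its BFS layers into overlapping slabs $G_0, G_1, \ldots$ of $2k$ consecutive layers, shifted by $k$, so that every potential copy of $H$ is contained in at least one slab. Each slab is $2k$-outerplanar and therefore has branchwidth $O(k)$, and because each edge lies in at most two slabs the total size of all slabs is $O(n)$. It suffices to solve the problem on each slab and to combine the Boolean, count, or list outputs by disjunction, addition, or concatenation (with a small correction for copies appearing in two slabs).

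For each slab $G_i$ I would next compute in time $2^{O(k)} |G_i|$ a sphere-cut decomposition of width $O(k)$; this is the ``fast method'' promised in the introduction and ensures that every middle set is carried by a noose, a simple closed curve in the plane meeting $G_i$ only at the $O(k)$ separator vertices. The embedded dynamic programming then processes the decomposition bottom up. At each node with noose $N$, a table entry records which vertices and edges of the pattern $H$ are pictured crossing $N$ in some planar embedding of $H$, together with how they are matched to the separator vertices and which pieces of $H$ are already realized inside the enclosed disk. Internal nodes are handled by checking topological compatibility of the ``inside'' and ``outside'' cut-through patterns along the common noose. Counting is carried out by replacing Boolean entries by integer multiplicities, and enumeration by a standard traceback that reads off $O(k)$ edges per output copy, giving the additive $O(\omega k)$ term.

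The technical heart of the argument, and the main obstacle, is bounding the number of distinct table entries by $2^{O(k)}$. A naive specification of a partial solution would cost $k^{O(k)}$, and the improvement requires packaging it as a combinatorial object whose count is single exponential. For this I would appeal to the enumerative bounds foreshadowed in the introduction: the number of planar embeddings of $H$, the number of triangulations of a cycle of length $O(k)$, and the number of noncrossing cycle-routings and face-vertex sequences in a plane graph of size $O(k)$ are each $2^{O(k)}$. Since an admissible cut-through pattern at a noose is determined by choosing a planar embedding of the portion of $H$ lying inside, a noncrossing routing of the noose through it, and an assignment of boundary vertices to separator vertices, the table size is the product of these single-exponential quantities. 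Multiplying by the $O(|G_i|)$ nodes of the sphere-cut decompositions across all slabs yields the claimed $2^{O(k)} n$ bound for decision, construction, and counting, and combining with the traceback gives the enumeration bound.
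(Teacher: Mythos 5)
Your proposal follows essentially the same route as the paper: Baker-style BFS layering into slabs of bounded branchwidth, a sphere-cut decomposition of each slab, and an embedded dynamic program whose states are routings of the separating noose through an embedding of the pattern, with the state count controlled by single-exponential enumerative bounds and with the planar-to-plane reduction handled by trying all $2^{O(k)}$ non-equivalent embeddings of $H$. The one claim you take on faith---that a $k$-vertex plane graph admits only $2^{O(k)}$ face-vertex sequences (combinatorial nooses)---is exactly the paper's central technical lemma, which it proves by injecting such sequences into the simple cycles of triangulations of (a subdivision of) $H$ and invoking the $2^{O(k)}$ bounds on cycles in planar graphs~\cite{BuchinKKSS07} and on planar triangulations~\cite{Tutte62}.
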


 It is worth mentioning that for $k$-{\sc Longest Path} on planar graphs, the authors of~\cite{DornPBF09}  give the first algorithm with time complexity  subexponential in the parameter value. The algorithm has running time $2^{O(\sqrt{k})} n + O(n^{3})$, employing the techniques \emph{Bidimensionality} and topology-exploiting dynamic programming. 
\emph{Bidimensionality Theory} employs results of  Graph Minor Theory by Robertson and Seymour  for planar graphs~\cite{RobertsonST94} and other structural graph classes to algorithmic graph theory (entry~\cite{DemaineFHT05}, for a survey~\cite{DemaineH07-II}). Unfortunately, Bidimensionality does only work for finding specific patterns in a graph, such as $k$-paths, but not for  subgraph isomorphism problems in general.
For a survey on other planar subgraph isomorphism problems with restricted patterns, 
 please consider~\cite{Eppstein99}.

\paragraph{\bf Organization.} After giving some definitions in Section~\ref{sect:prelim}, we show in Section~\ref{sect:compsc} how to obtain a sphere-cut decomposition of small width. In Section~\ref{sec:planesubiso}
 we restrict {\sc Planar Subgraph Isomorphism} to {\sc Plane Subgraph Isomorphism}.
  We first give some technical lemmas in Section~\ref{subsec:combnoose} to bound the number of ways a separator of the sphere-cut decomposition can be routed through a plane pattern. We describe and analyze  embedded dynamic programming in Section~\ref{subsec:embdp} followed by subsuming the entire algorithm for {\sc Plane Subgraph Isomorphism} in Section~\ref{subsec:alg}.
  In Section~\ref{sec:planesubiso} we bound the number of drawings of the pattern  and show how to solve {\sc Planar Subgraph Isomorphism}.


\section{Preliminaries}
\label{sect:prelim}

\vspace{-2mm}\paragraph{\bf  Subgraph isomorphism.} Let $G,H$ be two graphs. We call $G$ and $H$ \emph{isomorphic} if there exists a bijection $\nu : V(G) \rightarrow V(H)$ with
 $\{v,w\} \in E(G) \Leftrightarrow \{\nu(v),\nu(w)\} \in E(H)$. We call $H$ \emph{subgraph isomorphic to $G$} if there is a subgraph $H'$ of $G$ isomorphic to $H$.

\vspace{-2mm}\paragraph{\bf Branch Decompositions.} A {\em branch decomposition}~$\langle T,\mu \rangle$ of a graph
$G$ consists of an unrooted ternary tree $T$ (i.e., all internal vertices have 
degree three) and a bijection $\mu: L \rightarrow E(G)$
from the set $L$ of leaves of $T$ to the edge set of $G$. We
define for every edge~$e$ of~$T$ the {\em middle set}~$\mids(e)
\subseteq V(G)$ as follows: Let $T_1$ and $T_2$ be the two
connected components of $T\setminus \{e\}$. Then let $G_i$ be the
graph induced by the edge~set~$\{\mu(f): f \in L \cap V(T_i) \}$
for $i \in \{1,2\}$. The \emph{middle set} is the intersection of
the vertex sets of $G_1$ and $G_2$, i.e., $\mids(e):= V(G_1) \cap
V(G_2)$. The {\em width} $\bw$ of $\langle T,\mu \rangle$ is the
maximum order of the middle sets over all edges of $T$, i.e.,
$\bw(\langle T,\mu \rangle) := \max\{|\mids(e)| \colon e\in T\}$.
An optimal branch decomposition of $G$ is defined by a tree $T$
and a bijection $\mu$ which together provide the minimum width,
the {\em branchwidth} $\bw(G)$.

\vspace{-2mm}\paragraph{\bf Plane graphs and equivalent embeddings.} 
Let $\Sigma$ be the unit sphere.
A \emph{plane drawing} or \emph{planar embedding}  of a graph $G$ with vertex set $V(G)$ and edge set $E(G)$ maps
vertices  to points in the sphere, and edges to
simple curves between their end vertices, such that edges do not cross,
except in common end vertices. A  \emph{plane graph} is a graph G together with
a plane drawing. A \emph{planar graph} is a graph that admits a plane drawing. For details, see e.g.~\cite{Diestel00}.
The set of \emph{faces} $F(G)$ of a plane graph $G$ is defined as the union of the connected regions of $\Sigma \setminus G$.
A subgraph of a plane graph $G$, induced by the vertices and edges incident to a face $f \in F(G)$, is called a \emph{bound} of $f$.
  If $G$ is 2-connected, each bound of a face is a cycle. We call this cycle \emph{face-cycle} (for further reading, see e.g.~\cite{Diestel00}). 
 For a subgraph $H$ of a plane graph $G$, we refer to the drawing of $G$ reduced to the vertices and edges of $H$ as a~\emph{subdrawing} of $G$.
Consider any two drawings $G_1$ and $G_2$ of a planar graph $G$.
 A \emph{homeomorphism} of $G_1$ onto $G_2$ is a homeomorphism of $\Sigma$ onto itself which maps vertices, edges, and faces of $G_1$ onto
 vertices, edges, and faces of $G_2$, respectively.
We call two planar drawings of the same graph \emph{equivalent}, if they are homeomorphic. 
 
\begin{theorem}(e.g.~\cite{Diestel00})
\label{prop:whitney} 
Every 3-connected planar graph has a unique  embedding in a sphere $\Sigma$ up to homeomorphism. 

\end{theorem}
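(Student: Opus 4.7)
The plan is to proceed via the standard route of reducing a topological uniqueness statement to a combinatorial one, by showing that the set of face-cycles of any embedding of a 3-connected planar graph is determined purely by the abstract graph structure. Once that is established, any two embeddings have the same face-cycles, and hence the same combinatorial embedding (rotation system), from which a homeomorphism of $\Sigma$ can be produced.

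First, I would recall that a 3-connected planar graph $G$ is 2-connected, so by the fact stated earlier in the preliminaries every face-bound is a cycle. The key combinatorial object I would introduce is the notion of a \emph{peripheral cycle}: an induced cycle $C$ of $G$ such that $G \setminus V(C)$ is connected. I would then prove the central lemma that in any planar embedding of a 3-connected $G$, the face-cycles are exactly the peripheral cycles. One direction is easy: if $C$ bounds a face in some embedding, then $C$ is induced (else a chord would force an edge to lie inside the face-disk), and $G \setminus V(C)$ is connected because all remaining vertices lie on the other side of the Jordan curve $C$ and can be joined through that side. The harder direction uses 3-connectedness crucially: if $C$ is a non-facial cycle, then it separates $\Sigma$ into two open disks each containing at least one vertex of $G$; these two vertex sets must be joined by $C$ only, which contradicts the existence of three internally disjoint paths between them that 3-connectedness provides (equivalently, at most two vertices of $C$ can be attachment points from either side, forcing a 2-cut).

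Since the property ``induced cycle with connected complement'' depends only on the abstract graph $G$, the set of face-cycles is an invariant of $G$ and independent of the embedding. This means that in any two plane drawings $G_1,G_2$ of $G$, the collection of face-bounds coincides, and in particular the cyclic order of edges around each vertex is determined (up to global reflection) by the faces incident to that vertex. Thus $G_1$ and $G_2$ induce the same rotation system on $\Sigma$ (up to orientation).

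Finally, I would upgrade this combinatorial coincidence to a homeomorphism of $\Sigma$. The idea is to construct the homeomorphism face by face: map the 0-skeleton of $G_1$ to that of $G_2$ matching corresponding vertices, extend along each edge via a homeomorphism of the closed arcs (using the fact that any two simple arcs with the same endpoints are ambient isotopic in a disk neighbourhood), and extend across each face-disk, which is possible by the Schoenflies theorem since corresponding face-cycles in $G_1$ and $G_2$ bound Jordan-disks with matched boundary parametrisations. Piecing together the face-wise homeomorphisms yields a homeomorphism $\Sigma \to \Sigma$ sending vertices, edges and faces of $G_1$ to those of $G_2$, which is exactly the required equivalence of drawings. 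The main technical obstacle is the combinatorial lemma characterising face-cycles, where 3-connectedness must be used carefully to exclude non-peripheral cycles; the extension to a topological homeomorphism is then a routine Jordan--Schoenflies argument.
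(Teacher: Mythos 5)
The paper does not actually prove this statement---it is Whitney's theorem, quoted with a citation to Diestel---so the relevant comparison is with the standard textbook argument, which is exactly the route you chose: Tutte's characterization of face boundaries of a $3$-connected plane graph as the peripheral (induced, non-separating) cycles, followed by a Jordan--Schoenflies upgrade from combinatorial to topological equivalence. Your skeleton is the right one. However, several of the supporting claims, as written, are incorrect, and they happen to sit precisely where the $3$-connectedness hypothesis does its work. First, a chord of a face-cycle $C$ is \emph{not} forced to lie inside the face-disk; it is drawn in the \emph{other} disk bounded by $C$ (this really happens for $2$-connected graphs, e.g.\ a $4$-cycle with one chord). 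The correct argument is that a chord $xy$ would make $\{x,y\}$ a $2$-separator, separating the interiors of the two arcs of $C$, since any connecting path would have to cross either the face or the chord. Second, ``all remaining vertices lie on the other side of $C$, hence $G\setminus V(C)$ is connected'' is not a proof: lying in the same open disk does not imply connectivity in $G-V(C)$; again one must exhibit a $2$-separator from a hypothetical second component. Third, in your ``harder'' direction, a non-facial cycle need \emph{not} have a vertex strictly inside each disk---one side may contain only a chord (e.g.\ a $5$-cycle in the triangular prism)---so the case split must be ``either $C$ has a chord or both sides contain vertices,'' and only the second case yields disconnection of $G-V(C)$ directly. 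In fact you have the difficulty of the two directions reversed: ``peripheral $\Rightarrow$ facial'' needs little beyond the Jordan curve theorem, while ``facial $\Rightarrow$ peripheral'' is where $3$-connectedness is essential.

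One further step deserves more than a wave: passing from ``the two drawings have the same set of face-cycles'' to ``they induce the same rotation system up to global reflection'' requires an argument that the incidence pattern of faces at a vertex determines the cyclic edge order there, and that the per-vertex reversals can be made globally consistent; only then does the face-by-face Schoenflies extension go through. None of these gaps is fatal---each is repaired in the cited reference---but as written the proof would not stand on its own.
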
 
 
\vspace{-2mm}
\paragraph{\bf Triangulations.}
We call a plane graph $G$ a \emph{planar triangulation} or simply a \emph{triangulation} if every face in $F(G)$ is bounded by a triangle (a cycle of length three). 
 If $H$ is a subdrawing of a triangulation $G$, we call $G$ a \emph{triangulation of $H$}.


\vspace{-2mm}\paragraph{\bf Nooses and combinatorial nooses.} A \emph{noose} of a $\Sigma$-{plane} graph $G$ is
 a simple closed curve in $\Sigma$ that meets $G$ only in vertices.
  From the Jordan Curve Theorem, it then follows that nooses
separate $\Sigma$ into two regions. 
Let $V(N) = N \cap V(G)$ be the vertices  and $F(N)$ be 
 the  faces intersected by a noose $N$. 
 The \emph{length}  of $N$ is  the number $|V(N)|$ of vertices in $V(N)$. The clockwise order in which $N$ meets the vertices of $V(N)$ is a cyclic permutation $\pi$ on the set $V(N)$.
\begin{rem}
\label{radialcycle_NEW}
Let  a plane graph $H$ be a subdrawing of a plane graph $G$. 
Every noose $N$ in $G$ is also a noose  in $H$ and $V_H(N) \subseteq V_G(N)$.
\end{rem}

\noindent A \emph{combinatorial noose} $N_C = [v_0,f_0,v_1,f_1,\ldots,f_{k-1},v_k]$  in a plane graph $G$ is 
 an alternating sequence of vertices and faces of $G$,   such  that 
\squishlist 
\item $f_i$ is a face incident to both $v_i,v_{i+1}$ for all $i < k$,
\item $v_0=v_k$ and the vertices $v_1,\ldots,v_k$ are mutually distinct and 
\item if $f_i = f_j$ for any $i\neq j$ and $i,j = 0,\ldots,k-1$, 
         then the vertices $v_{i},v_{i+1},v_j$, and $v_{j+1}$  do not appear in the order $(v_i,v_j,v_{i+1},v_{j+1})$ on the bound of face $f_i=f_j$.
\squishend
 The \emph{length} of a combinatorial noose $[v_0,f_0,v_1,f_1,\ldots,f_{k-1},v_k]$ is $k$.

\begin{rem}
\label{obs:comb_noose}
The order in which  a noose $N$ intersects the faces $F(N)$ and the vertices $V(N)$ of a plane graph $G$  gives a unique alternating face-vertex sequence of $F(N) \cup V(N)$ which is a combinatorial noose $N_C$. Conversely, for every combinatorial noose $N_C$ there exists a noose $N$ with face-vertex sequence $N_C$.
\end{rem}

\noindent We may view combinatorial nooses as equivalence classes of nooses, that can be represented by the same face-vertex-sequence.

\vspace{-2mm}\paragraph{\bf Sphere cut decompositions.} 
For a~$\Sigma$-plane graph~$G$, we define a~\emph{sphere cut
 decomposition} or~\emph{sc-decomposition}~$\langle
T,\mu ,\pi \rangle$ as a branch decomposition which for every
edge~$e$ of~$T$  
has a  noose~$N_e$ that cuts $\Sigma$ into two regions~$\Delta_1$ and~$\Delta_2$ such that~$G_i \subseteq \Delta_i
\cup N_e$, where  $G_i$ is the
graph induced by the edge~set~$\{\mu(f): f \in L \cap V(T_i) \}$ for $i \in \{1,2\}$ and $T_1 \dot{\cup} T_2 = T\setminus \{e\}$. Thus $N_e$ meets $G$ only in
$V(N_e)=\mids(e)$ and its length is $|\mids(e)|$. 
  The vertices of every middle
set~$\mids(e) = V(G_1) \cap V(G_2)$ are enumerated according to a 
 cyclic permutation $\pi$ on $\mids(e)$.




\vspace{-2mm}\paragraph{}The following two propositions will be crucial in that they give us upper bounds on the number of partial solutions we will compute in our dynamic programming approach. With both propositions, we will bound the number of combinatorial nooses in a plane graph by the number of cycles in the triangulation of some auxiliary graph.
 With the second proposition we  bound the number of non-equivalent embeddings of planar graphs.  

\begin{prop}(\cite{BuchinKKSS07})
\label{totallength}
No planar $n$-vertex graph has more than $2^{1.53n}$ simple cycles. 
\end{prop}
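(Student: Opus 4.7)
The plan is to prove the bound by combining a reduction to planar triangulations with a recursive argument based on planar separators. First, I would observe that adding edges to a planar graph cannot decrease the number of simple cycles, so without loss of generality one may assume $G$ is a planar triangulation with $3n-6$ edges and $2n-4$ faces. In such a triangulation, each simple cycle is a Jordan curve in $\Sigma$ and therefore partitions $F(G)$ into two non-empty sets, giving a two-to-one correspondence between simple cycles and face subsets $F'\subseteq F(G)$ whose topological boundary is a single simple closed curve. It thus suffices to bound the number of such face subsets.

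Next, I would apply the planar separator theorem to find a vertex set $S$ with $|S|=O(\sqrt{n})$ whose removal splits $V(G)\setminus S$ into two parts $A,B$ of size at most $2n/3$ each, with no edges between $A$ and $B$. Any simple cycle of $G$ decomposes, with respect to $S$, into an even number of internally vertex-disjoint simple subpaths whose interiors lie entirely in $G[A\cup S]$ or entirely in $G[B\cup S]$, with endpoints in $S$. A cycle is therefore specified by a cyclically ordered subset of $S$ together with, for each consecutive pair in this order, a simple path with those endpoints on the prescribed side. This yields a recurrence of the form
\[
T(n)\le T(|A|+|S|)+T(|B|+|S|)+2^{O(\sqrt{n})}\cdot P(|A|+|S|)\cdot P(|B|+|S|),
\]
where $T(m)$ is the cycle count bound and $P(m)$ is the corresponding bound on the number of simple paths with prescribed endpoints in an $m$-vertex planar graph.

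The main obstacle is closing the recurrence with the specific constant $1.53$. The $2^{O(\sqrt{n})}$ overhead from separator-crossing patterns contributes only a lower-order term, but the auxiliary bound $P(m)\le c^m$ on simple paths must itself be established by a parallel induction, and the base $c$ feeds back into the cycle bound. To achieve $2^{1.53n}$ one needs tight local combinatorial estimates---for example, on how a cycle or path may traverse a small neighborhood in a triangulation---together with a careful numerical balancing of the two recurrences. This balancing, and the local estimates that exploit the structure of planar triangulations (in particular, their exactly $2n-4$ triangular faces), is the delicate step and is ultimately what produces the explicit exponent cited from \cite{BuchinKKSS07}.
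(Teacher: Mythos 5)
Proposition~\ref{totallength} is not proved in this paper at all; it is imported verbatim from \cite{BuchinKKSS07}, so the only question is whether your argument establishes the bound independently. It does not: the decisive quantitative step is missing, as you concede in your final paragraph, and beyond that admission there are two concrete structural problems. First, the recurrence does not match the decomposition you describe. A cycle crossing the separator $2j$ times splits into $j$ internally disjoint subpaths on each side, so the recombination term must count, for each side, \emph{families} of $j$ pairwise vertex-disjoint paths with prescribed endpoints in $S$ (linear forests anchored in $S$); a single factor $P(|A|+|S|)$ covers only the case $j=1$. Bounding the number of such disjoint path systems by $c^{m}$ with $c\le 2^{1.53}$ is a problem of at least the same difficulty as the one you started with. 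Second, even for $j=1$ the exponents of the two $P$-factors sum to $|A|+|B|+2|S|=n+O(\sqrt{n})$, so closing the recurrence at $2^{1.53n}$ already presupposes $P(m)\le 2^{1.53m+o(m)}$ for $s$--$t$ simple paths in planar graphs. Since planar graphs can have $2^{1.27m}$ or more such paths (the constructions giving many cycles give many paths), this auxiliary bound is itself nontrivial and essentially equivalent in difficulty to the target: the separator relocates the difficulty rather than resolving it, and no balancing of the two recurrences that would yield the exponent $1.53$ is exhibited or, in my view, available.

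By contrast, your opening reduction---pass to a triangulation and identify each simple cycle, two-to-one, with the pair of face sets it separates---is sound and is much closer to how an explicit constant of this kind is actually obtained. In a triangulation the dual is a $3$-regular planar graph on $2n-4$ vertices, and a simple cycle corresponds to a vertex subset of the dual that induces a connected subgraph with connected complement; the task then becomes a direct, recursion-free counting problem for such subsets in cubic planar graphs, which is where a sharp base of the form $c^{2n}$ with $c^2\approx 2.89$ can emerge. If you want a self-contained proof rather than the citation, that is the line to pursue; the separator detour should be dropped.
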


\begin{prop} (\cite{Tutte62})
\label{lem:tutte}
 The number  of non-isomorphic maximal planar  graphs on $n$ vertices is approximately $2^{3.24n}$. 
\end{prop}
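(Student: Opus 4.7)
The plan is to prove the asymptotic count via generating function techniques, which is the route Tutte originally took. First I would reduce the problem of counting non-isomorphic triangulations to counting \emph{rooted} triangulations, where a root consists of a distinguished oriented edge lying on a distinguished face. Since every unrooted triangulation on $n$ vertices has $3n-6$ edges and only a finite automorphism group, it admits $\Theta(n)$ distinct rootings. Hence an asymptotic of the form $t_n \sim c \cdot n^{\alpha} \cdot \rho^{-n}$ for the number of rooted triangulations immediately gives the same exponential base $\rho^{-n}$ for the number of isomorphism classes (up to a polynomial factor which is invisible at the level of $2^{3.24n}$).

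Next I would set up the functional equation for rooted triangulations, using Tutte's root-face decomposition: fix the root edge $e$ and consider the third vertex $v$ of the triangular face bordering $e$ on the inside. Either $v$ is an internal vertex that separates the remainder into two simpler rooted near-triangulations (triangulations of a polygon, possibly with a chord), or $v$ lies on the outer boundary and the decomposition splits the map into two pieces. This produces a two-variable functional equation $F(x,y)$ where $x$ marks vertices and $y$ marks the length of the outer boundary. The catalytic variable $y$ is essential here: it is what makes the decomposition recursive.

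The main obstacle is solving this functional equation. I would apply Tutte's quadratic method: take the partial derivative with respect to $y$, identify the unique algebraic branch $y = Y(x)$ at which the discriminant vanishes, and eliminate to obtain an algebraic equation purely in $x$ and $F(x,1)$. Standard singularity analysis (\emph{\`a la} Flajolet–Sedgewick) applied to this algebraic function then yields the dominant singularity at $x = 27/256$ and the sub-exponential correction $n^{-5/2}$, giving rooted triangulation counts of order $(256/27)^n$. Since $\log_2(256/27) = 8 - 3\log_2 3 \approx 3.2451$, combining this with the rooted-to-unrooted conversion from the first step produces exactly the stated bound of approximately $2^{3.24n}$ non-isomorphic maximal planar graphs on $n$ vertices.

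A purely combinatorial shortcut that avoids the quadratic method is to invoke Lagrange inversion on a bijection between rooted simple triangulations and a suitable class of ternary trees with decorations; this yields the same closed-form coefficient formula $\frac{2(4n-11)!}{(n-2)!(3n-7)!}$ (for $n$-vertex rooted triangulations of the sphere) whose asymptotics are immediate by Stirling. I would present whichever derivation is cleanest for the paper's purposes, but the essential ingredient — an algebraic generating function with dominant singularity $27/256$ — is the same in either approach.
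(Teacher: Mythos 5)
This proposition is not proved in the paper at all --- it is quoted directly from Tutte's 1962 census --- and your sketch is essentially a faithful reconstruction of that cited proof: root-face decomposition with a catalytic variable, the quadratic method (or the Lagrange-inversion/tree-bijection shortcut), dominant singularity $27/256$, and the rooted-to-unrooted conversion, with the arithmetic checking out since $\log_2(256/27)=8-3\log_2 3\approx 3.245$ and $\tfrac{2(4n-11)!}{(n-2)!(3n-7)!}$ is indeed Tutte's count of rooted simple triangulations on $n$ vertices. The only nit is that an unrooted triangulation admits between $1$ and $O(n)$ distinct rootings (not $\Theta(n)$, e.g.\ $K_4$ admits exactly one), but this polynomial discrepancy is invisible at the level of $2^{3.24n}$, so your argument stands.
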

\noindent Proposition~\ref{lem:tutte} also gives a bound on the number of non-isomorphic triangulations.
Any embedding of a maximal planar graph $G$ must be a
triangulation, otherwise $G$ would not be maximal.  
With Theorem~\ref{prop:whitney}, every maximal planar graph has a unique embedding which is a triangulation.
On the other hand, every triangulated graph is maximal planar. 

\section{Computing sphere-cut decompositions in linear time}
\label{sect:compsc}
\noindent 
In this section we introduce an algorithm for computing sc-decompositions of bounded width.  
Let $H$ be a connected subgraph of $G$ with $|V(H)|= k$, and let $v \in V(H)$. Then $H$ is a subgraph of the induced subgraph $G^v$ of $G$, where $G^v=G[S]$ with $S =\{ w \in S  \mid $ dist$(v,w) \leq k\}$ (dist$(v,w)$ denotes the length of a shortest path between $v$ and $w$ in $G$). This observation helps us to shrink the search space of our algorithm by cutting out chunks of $G$ of bounded width  and solve subgraph isomorphism separately on each chunk.
 With the algorithm of Tamaki~\cite{Tamaki03}, one can compute a branch decomposition of $G^v$ of width $\leq 2k+1$, following similar ideas as in the approach of Baker~\cite{Baker94} for tree decompositions.
 With some simple modifications, we achieve the same result for sc-decompositions. 
In Appendix~\ref{app:tamaki_alg} we prove the following lemma and give an algorithm that computes  a sc-decomposition of bounded width in linear time.

\begin{lemma}
\label{lem:tamaki}
Let $G$ be a plane graph with  a rooted spanning tree whose root-leaf-paths have length at most $k$. We can find an sc-decomposition of width $2 k +1 $
   in time $O(k n)$.
\end{lemma}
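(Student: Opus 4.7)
The plan is to adapt Tamaki's algorithm~\cite{Tamaki03} for computing branch decompositions of plane graphs and show that, when fed a spanning tree of depth at most $k$, the resulting decomposition can be made into a sphere-cut decomposition of width $2k+1$ in linear time $O(kn)$.

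First, I would recall the geometry behind Tamaki's construction. Given the rooted spanning tree $T^{*}$ of depth $\leq k$, every face $f$ of $G$ is incident to at most two root-to-leaf paths of $T^{*}$; concatenating these two paths gives a closed walk in $\Sigma$ of length at most $2k+1$ that separates $\Sigma$ into two regions, one on each side of the face $f$ together with the chosen tree paths. Using such closed walks as primitive separators, Tamaki's algorithm recursively splits $G$, producing a branch tree $T$ whose middle sets are precisely the vertex sets of these closed walks, hence of size at most $2k+1$. Running this recursively in a bottom-up manner on the face-dual structure, and amortizing the work across the $O(n)$ faces, yields total running time $O(kn)$.

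Next, I would verify the sphere-cut property. For this, I would show that each separating closed walk used in Tamaki's algorithm can be realized by a noose $N_e$ in $\Sigma$: route a simple curve along the two spanning tree paths and across the interior of the separating face $f$, perturbed slightly so that it meets $G$ exactly in the vertices of the walk. By the Jordan Curve Theorem, $N_e$ partitions $\Sigma$ into two closed disks $\Delta_1,\Delta_2$, and the two subgraphs $G_1,G_2$ obtained from the branch decomposition lie in $\Delta_1\cup N_e$ and $\Delta_2\cup N_e$ respectively. The cyclic permutation $\pi$ on $\mids(e)=V(N_e)$ is then inherited from the clockwise order in which $N_e$ visits its vertices, which matches the concatenation order of the two tree paths around $f$. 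Because the recursive splits occur entirely inside one of the two disks $\Delta_i$, the sphere-cut property is preserved inductively down the branch tree.

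The main obstacle is the last point: ensuring that when we recursively split inside a region $\Delta_i$, the new separating walk still respects the outer noose and still consists of at most $2k+1$ vertices. This requires being a bit careful about what "root-to-leaf path" means inside a sub-region; I would handle this by contracting the portion of $T^{*}$ outside $\Delta_i$ to a single virtual root, which keeps tree depth $\leq k$ and keeps face-separators of length $\leq 2k+1$. Once this invariant is in place, Lemma~\ref{lem:tamaki} follows by combining the width bound $2k+1$, the noose realizability of each separator, and the $O(kn)$ time accounting of Tamaki's algorithm, with the full technical details deferred to Appendix~\ref{app:tamaki_alg}.
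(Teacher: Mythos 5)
Your high-level intuition is right --- root-leaf paths of length $\le k$ yield separators meeting $G$ in at most $2k+1$ vertices, and these separators are realizable as nooses --- but the mechanism you describe is not Tamaki's algorithm, and the step you yourself flag as ``the main obstacle'' is exactly where the argument is incomplete. You present the construction as a \emph{recursive splitting} of $G$ by face-separators, each formed by concatenating two root-to-leaf paths of the spanning tree of $G$. First, the premise ``every face $f$ is incident to at most two root-to-leaf paths'' is false as stated; what is true is that each \emph{non-tree structure} gives rise to a fundamental cycle consisting of two tree paths. Second, and more importantly, the recursion never actually produces a branch decomposition: you do not say how the ternary tree $T$ and the bijection $\mu$ from leaves to edges of $G$ arise, why the recursion bottoms out at single edges, or why \emph{every} edge of $T$ (not just the top-level split) has a middle set of size $\le 2k+1$. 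Your fix for the recursion --- contracting the portion of the spanning tree outside $\Delta_i$ to a virtual root --- changes the face structure of the region and is not justified to preserve either the depth bound or the noose property; this is precisely the invariant that needs proof, and deferring it to the appendix does not help because the appendix proves a different construction.

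The paper's proof (Appendix~\ref{app:tamaki_alg}) is global rather than recursive and goes through the \emph{radial graph} $R_G$ (bipartite on $F(G)\cup V(G)$). A spanning tree of $G$ of depth $\le k$ yields a BFS tree $T_S$ of $R_G$ rooted at a face $f$ incident to the root, of depth $\le 2k+1$. The decomposition tree is then obtained in one shot as $T^* = R_G^*\setminus E(T_S)^*$, the spanning tree of the medial graph complementary to $T_S$; its nodes correspond to edges of $G$, and it is made ternary by local gadgets. Each edge of $T^*$ is dual to a non-tree edge $e=\{f,g\}$ of $R_G$, and its middle set is the fundamental cycle of $e$ with respect to $T_S$ (through the lowest common ancestor of $f$ and $g$), which is automatically an alternating face--vertex cycle in $R_G$, i.e.\ a combinatorial noose, containing at most $2k+1$ vertices of $G$. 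This is what simultaneously delivers the width bound, the sphere-cut property for \emph{all} middle sets, and the $O(kn)$ running time; your proposal does not reconstruct this and would need substantial additional work to close the recursion invariant.
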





\section{Plane subgraph isomorphism}
\label{sec:planesubiso}

\noindent In this section, we study the subgraph isomorphism problem on patterns and host graphs that 
 are embedded in a sphere $\Sigma$. 
 In Section~\ref{sec:planarsubiso} we carry over our results to planar graphs. 
 We first introduce  some topological tools that allow us to define a refined 
  dynamic programming approach.
    At  every step of the dynamic programming approach, we  compute all possibilities of how a combinatorial noose $N$
corresponding to a middle set of the sc-decomposition $\langle T,\mu, \pi \rangle$ of $G$ can intersect a subdrawing equivalent  to pattern $H$. Each intersection gives rise to a combinatorial noose of $H$. See Figure~\ref{fig:dp1} for an illustration.
\begin{figure}[h]
\begin{center}
\includegraphics[width=0.4\textwidth]{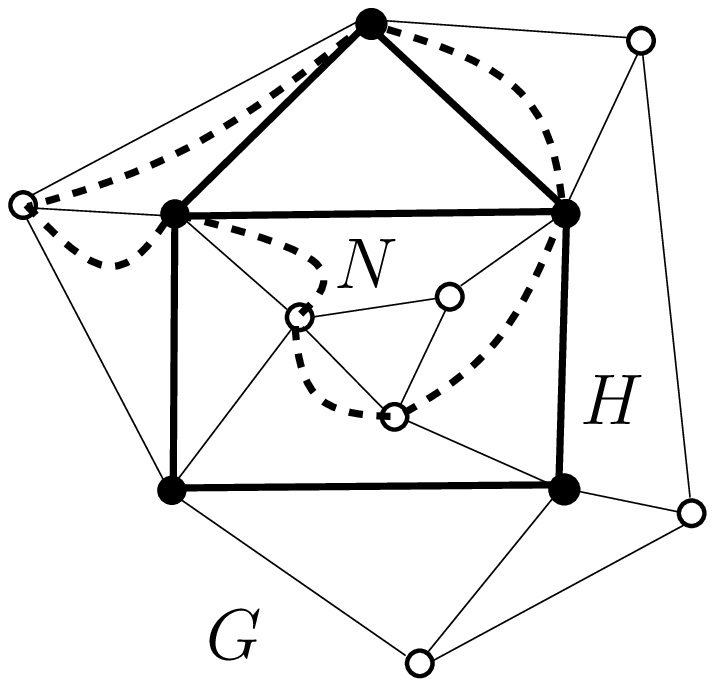}
 \includegraphics[width=0.4\textwidth]{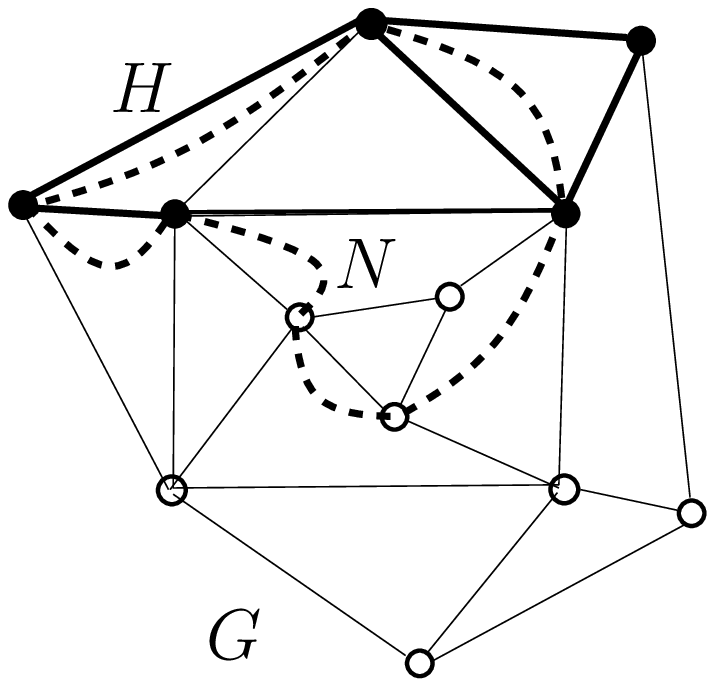}
\end{center}
\caption{{\small On the left, we have a plane graph $G$ with an emphasized subdrawing $H$  intersected by a combinatorial noose $N$  indicated by dashed lines. 
  On the right, we have the same graph $G$ with a different copy of $H$  intersected by $N$.}} 
\label{fig:dp1}
\end{figure}  
 
 The running time of the algorithm  crucially depends on the number of combinatorial nooses in $H$.
   The aim of this section is to prove the following:

 \begin{theorem}
\label{theor:planeSI}
 Let $G$ be a plane graph on $n$ vertices and $H$ be a plane graph on $k \leq n$ vertices.
  We can decide if there is a subdrawing of $G$ that is equivalent to $H$ in time $2^{O(k)}n$.
   We can find and count subdrawings equivalent to $H$ in time  $2^{O(k)}n$, and enumerate $\omega$ subdrawings
    in time $2^{O(k)}n+O(\omega k)$. 
 \end{theorem}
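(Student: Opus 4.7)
The plan is to combine the linear-time sc-decomposition from Section~\ref{sect:compsc} with an embedded dynamic programming whose tables are indexed by combinatorial nooses of the \emph{pattern}. This is the key reversal that makes the exponent depend on $k$ rather than on the width of a decomposition of $G$.

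First I would reduce to sc-decompositions of width $O(k)$. Since $H$ is connected on $k$ vertices, every subdrawing of $G$ equivalent to $H$ lies in some ball $G^v$ of radius $k$ around one of its vertices. Partitioning $G$ into Baker-style slabs of $2k$ consecutive BFS layers yields a family of subgraphs $G^v$ of total size $O(n)$, each carrying a rooted spanning tree of root-leaf length at most $2k$, and each containing every potential solution touching the corresponding layer. Applying Lemma~\ref{lem:tamaki} to each gives an sc-decomposition $\langle T,\mu,\pi\rangle$ of width $w = O(k)$ in total time $O(kn)$.

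Next, for each such decomposition I would run a bottom-up DP on $T$. For an edge $e$ of $T$, the noose $N_e$ cuts $\Sigma$ into disks $\Delta_1,\Delta_2$ with $G_i \subseteq \Delta_i \cup N_e$. By Remarks~\ref{radialcycle_NEW} and~\ref{obs:comb_noose}, any subdrawing $H' \cong H$ of $G$ meets $N_e$ along a noose of $H'$, which is fully captured by a triple $(N^H,\varphi,s)$ consisting of a combinatorial noose $N^H$ of $H$, a cyclic-order-preserving injection $\varphi \colon V(N^H) \hookrightarrow \mids(e)$ respecting $\pi$, and a bit $s$ specifying which side of $N^H$ in $H$ is drawn inside $\Delta_1$. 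A DP entry at $e$ indexed by such a triple stores, for the ``inside'' graph $G_1$, whether (or how many ways, for counting) one can embed the corresponding piece of $H$ into $G_1$ so that its trace on $\mids(e)$ is exactly $\varphi(V(N^H))$. Merging two child tables into their parent's table is a gluing operation: one enumerates compatible pairs whose combinatorial nooses fit together along the shared middle set into a valid piece of $H$, which takes time proportional to the product of the two tables.

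The main obstacle is bounding the table size per edge by $2^{O(k)}$, and this is where Proposition~\ref{totallength} enters. The number of combinatorial nooses in $H$ can be bounded by subdividing each face of $H$ with a new central vertex joined to its boundary: combinatorial nooses of $H$ correspond bijectively to simple cycles of the resulting triangulated auxiliary graph on $O(k)$ vertices, so there are at most $2^{O(k)}$ of them. A cyclic-order-preserving injection $\varphi$ of a $\le k$-element cyclic sequence into the cyclically ordered $w$-set $\mids(e)$ is determined by a starting position ($w$ choices) and a subset of $\mids(e)$ ($2^w$ choices), which is $2^{O(k)}$; the side bit adds a factor $2$. Thus each of the $O(n)$ edges of $T$ carries a table of size $2^{O(k)}$ that is built in $2^{O(k)}$ time, and the whole DP runs in $2^{O(k)} n$. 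The counting version keeps integer sums and products in place of Boolean OR and AND with identical asymptotics; for enumeration, after the $2^{O(k)} n$ pass a standard backtrace through the tables reconstructs each of the $\omega$ solutions in $O(k)$ time by walking only the $O(k)$ tree edges that witness it, contributing the additive $O(\omega k)$ term.
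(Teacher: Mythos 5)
Your overall architecture coincides with the paper's: Baker-style slabs plus Lemma~\ref{lem:tamaki} to obtain width-$O(k)$ sc-decompositions, and a bottom-up DP whose states record how the host noose traces a combinatorial noose of the pattern, with the table size controlled by the number of combinatorial nooses of $H$. The one place where your argument genuinely breaks is the step you yourself call the main obstacle: bounding the number of combinatorial nooses of $H$ by $2^{O(k)}$. You claim that stellating each face of $H$ (adding a center vertex joined to the face boundary) yields a triangulation whose simple cycles are in bijection with the combinatorial nooses of $H$. This fails for two reasons. First, the definition of a combinatorial noose explicitly permits the same face $f$ to occur at several positions $f_i=f_j$ of the sequence (only an interleaving of the four incident vertices on the bound of $f$ is forbidden); a noose that passes through a large face twice maps under your correspondence to a closed walk visiting the center vertex of $f$ twice, hence not to a simple cycle, so your map is not a bijection onto simple cycles and Proposition~\ref{totallength} does not directly apply. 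Second, when $H$ is not $2$-connected the bound of a face need not be a cycle, and joining a center vertex to a boundary walk that repeats vertices does not produce a simple triangulation at all.

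The paper devotes Lemmas~\ref{lem:radialwalks_NEW}--\ref{lem:walktriang_DAMN} precisely to this point: it subdivides every edge of $H$ (so consecutive noose vertices never already share an edge), then, for each noose separately, inserts the chords $\{v_i,v_{i+1}\}$ one at a time, splitting the current face into two new faces at each insertion so that later visits to the ``same'' face land in distinct faces, and finally completes to an arbitrary triangulation. Because the resulting triangulation depends on the noose, one must then sum over all triangulations of the subdivided graph, which is where Tutte's bound (Proposition~\ref{lem:tutte}) combined with Whitney's theorem (Theorem~\ref{prop:whitney}) enters; the product of the two single-exponential bounds is still $2^{O(k)}$. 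You need some version of this argument, or another correct encoding of face-repeating nooses, to close the gap. The rest of your proposal is sound modulo small slips: the overlapping slabs have total size $O(kn)$, not $O(n)$ (harmless, since $k\cdot 2^{O(k)}=2^{O(k)}$), and your state $(N^H,\varphi,s)$ is essentially the inverse of the paper's valid mapping $\gamma$, which additionally records, redundantly, the forced assignment of the remaining noose vertices and faces to faces of $H$.
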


\vspace{-2mm}\subsection{Combinatorial nooses in plane graphs}
\label{subsec:combnoose}


\noindent For a refined algorithm analysis we now take a close look at combinatorial nooses of plane graphs.
 In particular we are interested in counting the number of combinatorial nooses.  
 In this subsection, we will prove the following lemma:
 
 \begin{lemma}
\label{lem:numradwalks}
Every plane $k$-vertex graph has $2^{O(k)}$ 
 combinatorial nooses.
\end{lemma}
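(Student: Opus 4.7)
\smallskip

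\noindent\emph{Proof plan.} The strategy, as foreshadowed by the remark preceding Propositions~\ref{totallength} and~\ref{lem:tutte}, is to map the combinatorial nooses of $H$ injectively to the simple cycles of a plane triangulation on $O(k)$ vertices, and then invoke Proposition~\ref{totallength}.

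\smallskip

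\noindent\textbf{Auxiliary graph.} I would take $R=R(H)$ to be the \emph{radial graph} of $H$: its vertex set is $V(H)\cup F(H)$, with an edge $vf$ whenever $v\in V(H)$ lies on the bound of $f\in F(H)$. Euler's formula for simple plane graphs gives $|F(H)|\le 2k-4$, so $|V(R)|=O(k)$; moreover, $R$ inherits from $H$ a plane embedding whose faces are quadrilaterals, one per edge of $H$. Triangulating each such quadrilateral by a diagonal yields a plane triangulation $T$ of $\Sigma$ on $O(k)$ vertices.

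\smallskip

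\noindent\textbf{Nooses as cycles.} A combinatorial noose $N_C=[v_0,f_0,v_1,\ldots,f_{k-1},v_k]$ of $H$ naturally induces a closed walk $v_0,f_0,v_1,f_1,\ldots,v_k$ in $R\subseteq T$, alternating between vertex-nodes and face-nodes. By the second axiom of a combinatorial noose, each vertex-node appears exactly once, so the only obstruction to this walk being a simple cycle is a possible repetition of face-nodes. If $f_i=f_j=f$ for some $i\neq j$, the third axiom forces the chords $(v_i,v_{i+1})$ and $(v_j,v_{j+1})$ of the boundary cycle of $f$ to be non-crossing. Exploiting this nested structure, I would then reroute all but one visit to $f$ through triangulation diagonals of $T$ lying inside $f$, thereby producing a simple cycle $C(N_C)$ of $T$.

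\smallskip

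\noindent\textbf{Injectivity and conclusion.} The map $N_C\mapsto C(N_C)$ would be shown to be injective by reversing the construction: each diagonal used in place of a face-detour can be ``uncontracted'' to the corresponding $v_i,f,v_{i+1}$ segment, and the nested non-crossing structure forces this reconstruction to be unique. Hence the number of combinatorial nooses of $H$ is at most the number of simple cycles of $T$, which by Proposition~\ref{totallength} is at most $2^{1.53|V(T)|}=2^{O(k)}$, yielding the claim.

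\smallskip

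\noindent\textbf{Main obstacle.} The delicate step is the rerouting procedure: one must ensure that the triangulation of $R$ contains \emph{enough} diagonals to untangle any admissible (i.e., nested) system of repeated face-visits, and that the uncontraction is well-defined for every resulting simple cycle. Should the plain radial-graph triangulation prove too rigid to realize all such nestings, I would fall back on a slightly enlarged auxiliary graph—for instance, placing a centre vertex in each face of $H$ and fan-triangulating—so that every chord of every face is present as an edge of $T$; this still leaves only $O(k)$ vertices and preserves the bound via Proposition~\ref{totallength}.
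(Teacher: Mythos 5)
Your high-level target---inject the combinatorial nooses of $H$ into the simple cycles of a planar triangulation on $O(k)$ vertices and then apply Proposition~\ref{totallength}---is the right one, and you correctly isolate the repeated-face problem as the crux. But the rerouting step does not go through, and neither does your fallback. Inside the closed region of a face $f$ of $H$, the only vertices available in the radial graph (or in any fan-triangulation of $f$ from a single centre) are the face-node of $f$ and the vertices on the bound of $f$; a detour for a second visit $[v_j,f,v_{j+1}]$ that avoids the face-node must therefore run along vertices of the face-cycle of $f$, and those vertices may already be occupied by the noose. Concretely, let $f$ be a hexagonal face with boundary $v_1,\ldots,v_6$ and consider a combinatorial noose containing the three subsequences $[v_1,f,v_2]$, $[v_3,f,v_4]$, $[v_5,f,v_6]$: the corresponding chords are pairwise non-crossing, so this is a legitimate combinatorial noose, yet all six boundary vertices are used and only one of the three visits can pass through the face-node, so no two vertex-disjoint detours exist in any triangulation of the radial graph. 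More generally, no single auxiliary graph on $O(k)$ vertices built from $H$ alone can host every noose as a simple cycle, because the chord system needed inside each face depends on the noose.

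The paper resolves exactly this by letting the triangulation depend on the noose and then paying for that freedom with a second enumerative bound, which is the ingredient missing from your plan. For each noose it adds the chords $\{v_i,v_{i+1}\}$ inside the faces $f_i$ actually traversed (Lemmas~\ref{lem:walktriang_NEW} and~\ref{lem:walktriang_DAMN}), completes to some triangulation $H'$ in which the noose becomes a simple cycle (Lemma~\ref{lem:radialwalks_NEW}), and then multiplies the cycle bound of Proposition~\ref{totallength} by the number $2^{3.24\ell}$ of possible triangulations from Tutte's census (Proposition~\ref{lem:tutte}), using Theorem~\ref{prop:whitney} to fix each triangulation's embedding. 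A further wrinkle you would also need to handle: when $H$ is not $3$-connected, an already existing edge $\{v_i,v_{i+1}\}$ need not be incident to the face $f_i$ the noose uses, which is why the paper first subdivides every edge to form $H^*$ before adding chords.
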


 Before proving this lemma,  
 we  show that  every combinatorial noose of a plane graph on $k$ vertices corresponds to a cycle in some other plane graph on  at most $O(k)$ vertices. 
    First we relate combinatorial nooses in a planar triangulation $H'$ to the cycles in $H'$. In a second step we relate combinatorial nooses of a 3-connected plane graph $H$ to cycles in the triangulations of $H$.
  Finally, we will show that for any plane graph $H$ there is an auxiliary graph $H^*$, such that the combinatorial nooses of  $H$ can be injectively mapped to the cycles of the triangulations of $H^*$.  From Proposition~\ref{totallength} we know an upper bound on the number of cycles in planar graphs, which we employ to prove  Lemma~\ref{lem:numradwalks}.

\begin{lemma}
 \label{lem:radialwalks_NEW}
 Let $H$ be  a planar triangulation and $N_C= [v_0,f_0,v_1,f_1,\ldots,f_{k-1},v_k]$ 
 a combinatorial noose of $H$. 
 Then for every pair of consecutive vertices $v_i,v_{i+1}$ in $N_C$, there is an edge ${v_i,v_{i+1}}$ in $E(H)$. 
  That is, the sequence $[v_0,v_1,\ldots,v_k]$ is a simple cycle in $H$ if $|V(N_C)|>2$, and if $|V(N_C)|=2$, it corresponds to a single edge in $H$.
  \end{lemma}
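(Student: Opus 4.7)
The plan is to exploit the fact that a triangulated face has only three vertices, so any two distinct vertices on it are joined by an edge. First I would verify that consecutive vertices $v_i, v_{i+1}$ of $N_C$ are always distinct. This uses the distinctness clause of the definition of a combinatorial noose: for $1 \le i \le k-1$ both vertices lie in the set $\{v_1,\ldots,v_k\}$ whose members are mutually distinct, and for $i = 0$ we have $v_0 = v_k$, while $v_1 \neq v_k$ by the same clause, so $v_0 \neq v_1$.

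Next I would invoke the triangulation hypothesis. Since $H$ is a planar triangulation, each face $f_i$ is bounded by a face-cycle of length three; in particular $f_i$ is incident to exactly three vertices, any two of which are adjacent in $H$. By the first bullet in the definition of a combinatorial noose, $f_i$ is incident to both $v_i$ and $v_{i+1}$, and by the previous paragraph these are distinct, so $\{v_i,v_{i+1}\} \in E(H)$. This gives the first claim for every $i \in \{0,\ldots,k-1\}$.

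To conclude, observe that the resulting edge sequence $\{v_0,v_1\},\{v_1,v_2\},\ldots,\{v_{k-1},v_k\}$ is a closed walk in $H$ (since $v_0 = v_k$). When $|V(N_C)| > 2$, i.e. $k \ge 3$, the vertices $v_0, v_1, \ldots, v_{k-1}$ are pairwise distinct (they are a subset of the mutually distinct $v_1,\ldots,v_k$), so the walk is a simple cycle. When $|V(N_C)| = 2$, i.e. $k = 2$, the walk uses $\{v_0,v_1\}$ followed by $\{v_1,v_2\} = \{v_1,v_0\}$, that is, a single edge traversed once in each direction, as asserted.

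The proof is essentially immediate once one reads off the definitions, so I do not anticipate any serious obstacle; the only point that needs a moment of care is the edge case $k = 2$ and checking that the distinctness axiom indeed rules out $v_0 = v_1$, which is why the initial vertex-distinctness step deserves to be stated explicitly.
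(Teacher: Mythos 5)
Your proof is correct and follows the same route as the paper's: both arguments use that each $f_i$ is bounded by a triangle containing the distinct vertices $v_i,v_{i+1}$, hence $\{v_i,v_{i+1}\}\in E(H)$, and then read off the simple cycle (resp.\ single edge when $|V(N_C)|=2$) from the distinctness clause of the definition. Your explicit check that consecutive vertices are distinct is a small point the paper leaves implicit, but the substance is identical.
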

 \proof 
Since $H$ is triangulated, we have that every $f_i \in N_C$ is bounded by a triangle $\Delta$ where $v_i,v_{i+1}$ are two of the three vertices of $\Delta$ and $v_i,v_{i+1}$ have an edge in common.
 Since  vertices occur only once in $N_C$, $f_i$ is unique in $N_C$ if $|V(N_C)|>2$, that is, there is no $f_j \in N_C$ with $i \neq j$ and $f_i = f_j$. Hence we map each $f_i$ one-to-one to edge $e_i=\{v_i,v_{i+1}\}$ and get a cycle  $[v_0,e_0,v_1,e_1,\ldots,e_{k-1},v_k]$. For $|V(N_C)|=2$, $f_0$ and $f_1$ are  incident faces to edge $\{v_0,v_1\}$. 
 For an illustration, see Figure~\ref{fig:noosecycle}. 
 \begin{figure}[h]
\begin{center}
\includegraphics[width=0.33\textwidth]{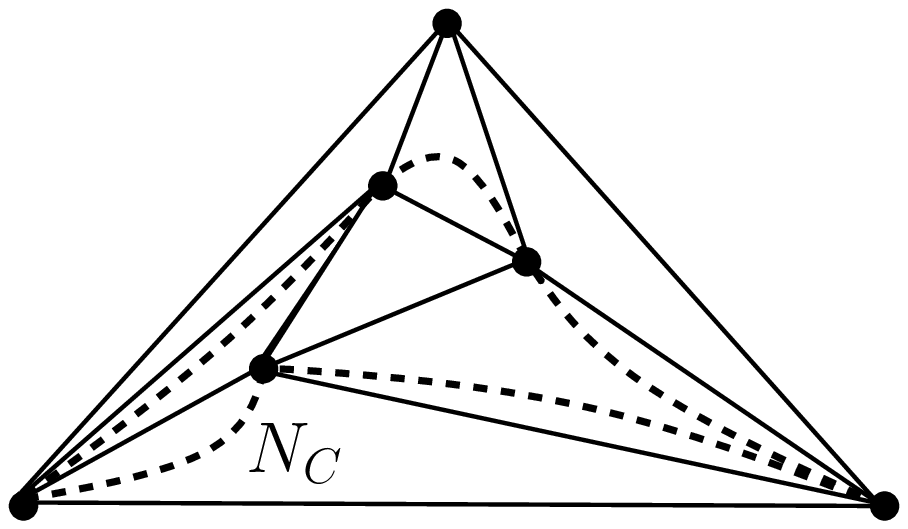}
 \includegraphics[width=0.33\textwidth]{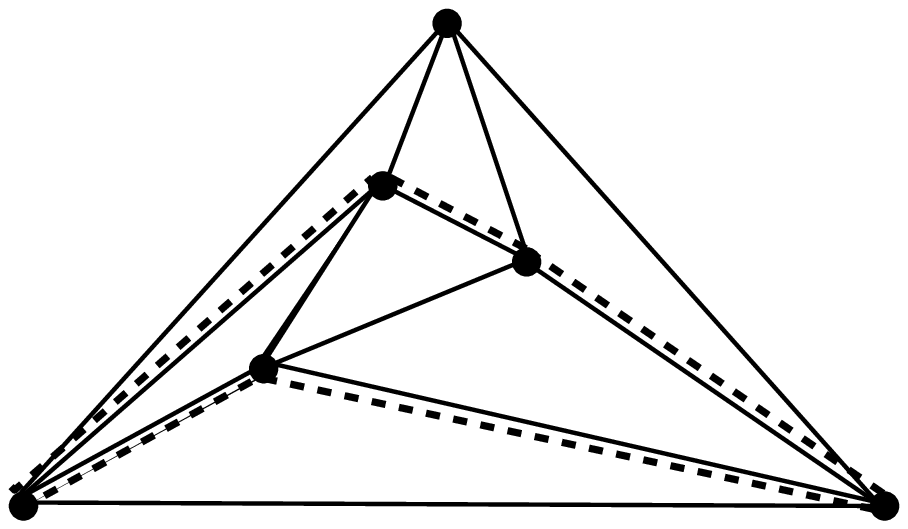}
\end{center}
\caption{{\small On the left, there is a triangulation with a combinatorial noose $N_C$ indicated by dashed lines. On the right, we have mapped the noose to a cycle indicated by dashed lines.}} 
\label{fig:noosecycle}
\end{figure}
\qed

\begin{lemma}
\label{lem:walktriang_NEW}
Let $H$ be a 3-connected plane  graph and $N_C= [v_0,f_0,v_1,f_1,\ldots,f_{k-1},v_k]$ 
 a combinatorial noose of $H$ with $|V(N_C)|>2$.
  Then there exists a planar triangulation $H'$ of $H$, such that 
   $[v_0,v_1,\ldots,v_k]$ is a cycle in $H'$.   
\end{lemma}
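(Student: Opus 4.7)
The plan is to construct $H'$ by adding edges inside the faces of $H$ so that every consecutive pair $\{v_i,v_{i+1}\}$ of the noose becomes an edge, and then to complete the result to a triangulation. Since $H$ is $3$-connected, every face $f\in F(H)$ is bounded by a cycle $\partial f$, and so it makes sense to speak of chords of $f$. For each face $f$ I would collect the set
\[
P_f \;=\; \{\,\{v_i,v_{i+1}\} : 0\le i\le k-1,\; f_i=f\,\}
\]
of noose-pairs that pass through $f$, and view each pair as a chord of $\partial f$ to be drawn inside $f$.

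The key point is that the third condition in the definition of a combinatorial noose forbids, for any two indices $i\neq j$ with $f_i=f_j=f$, the four vertices $v_i,v_{i+1},v_j,v_{j+1}$ from appearing along $\partial f$ in the interleaving (crossing) order $(v_i,v_j,v_{i+1},v_{j+1})$. This is precisely the combinatorial non-crossing condition for chords of a polygon, so all pairs in $P_f$ can be simultaneously realized as non-crossing chords inside $f$. For each such pair I would split into three cases: (A) it is already an edge of $\partial f$, and I do nothing; (B) it is an edge of $H$ bounding some face other than $f$, and again I do nothing; (C) it is not an edge of $H$ at all, and I add it as a new chord inside $f$. Carrying this out face by face produces a simple plane graph $H_1\supseteq H$ in which every consecutive pair $(v_i,v_{i+1})$ of the noose is adjacent. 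Finally I would extend $H_1$ to a planar triangulation $H'$ by filling in each remaining non-triangular face (e.g.\ by a fan from a fixed boundary vertex); this standard step preserves planarity, simplicity, and all previously present edges. Since $v_0,v_1,\ldots,v_{k-1}$ are mutually distinct with $v_k=v_0$ by the noose definition, and each consecutive pair is an edge of $H'$, the sequence $[v_0,v_1,\ldots,v_k]$ is a simple cycle in $H'$.

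The step I expect to require the most care is maintaining simplicity of $H'$: naively adding a chord across $f$ for every pair in $P_f$ would create a multi-edge whenever that pair is already an edge of $H$ drawn in some other face (Case B). The three-case split above avoids this, while the non-crossing condition from the third bullet of the combinatorial-noose definition is exactly what lets all of the Case-(C) chords in a single face be drawn simultaneously inside $f$. A minor additional check is that chord additions in different faces happen in pairwise disjoint open regions of $\Sigma$ and hence cannot interfere with one another, so the subsequent extension of $H_1$ to a simple planar triangulation is unobstructed.
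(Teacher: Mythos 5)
Your proof is correct and follows essentially the same route as the paper: both arguments insert the missing noose pairs as non-crossing chords inside their respective faces (justified by the third condition in the definition of a combinatorial noose, i.e.\ the non-self-intersection of the underlying noose) and then complete to a triangulation, concluding that the distinct noose vertices with consecutive adjacencies form a simple cycle. Your Case (B) is in fact vacuous for $3$-connected plane graphs, since face cycles are then induced, which is precisely the point where the paper invokes $3$-connectivity.
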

\proof 
We proceed in two phases. First we iteratively add edges to $H$ and transform $N_C$ into another combinatorial noose such that every two consecutive vertices in $N_C$ have a common edge. Then we triangulate the resulting graph.

 For every pair of consecutive vertices $v_i,v_{i+1}$ in $N_C$, if $v_i,v_{i+1}$ have no edge in common, add $e_i=\{v_i,v_{i+1}\}$  to $E(H)$.
 Thereby the drawing of $e_i$ splits $f_i$ into two new faces $f^a_i$ and $f^b_i$, bounded by face-cycle $C^a$ and $C^b$  respectively, where $C^a \cap C^b = e_i$.
 Since $N_C$ corresponds to a noose by Remark~\ref{obs:comb_noose} and nooses are not self-intersecting, we observe the following for $|V(N_C)|>2$: for every $f_j=f_i$ in $N_C$ with $j \neq i$ we have that  both $v_j,v_{j+1}$ are in one of $C_a$ and $C_b$. Thus, adding edge $e_i$ will not cross any other edge added in this process.
  In $N_C$, we replace $f_i$ by one of $f^a_i$ and $f^b_i$, and every $f_j = f_i$ by $f^a_i$ if $F_j$ is bounded by $C^a$ and by $f^b_i$ otherwise. 
 Once we have an edge  for every pair of consecutive vertices  in $N_C$, we  note that for every sub-sequence $[v_i,f_i,v_{i+1}]$ of $N_C$ the edge $e_i=\{v_i,v_{i+1}\}$ is incident to face $f_i$ since, by 3-connectivity, edge $e_i$ is uniquely embedded in $H$.
 We then add edges arbitrarily to obtain a triangulation $H'$. By Lemma~\ref{lem:radialwalks_NEW}, the vertices of $N_C$ correspond to a cycle in $H'$. 
 For an illustration, see Figure~\ref{fig:noosecycle2}. 
 \begin{figure}[h]
\begin{center}
\includegraphics[width=0.33\textwidth]{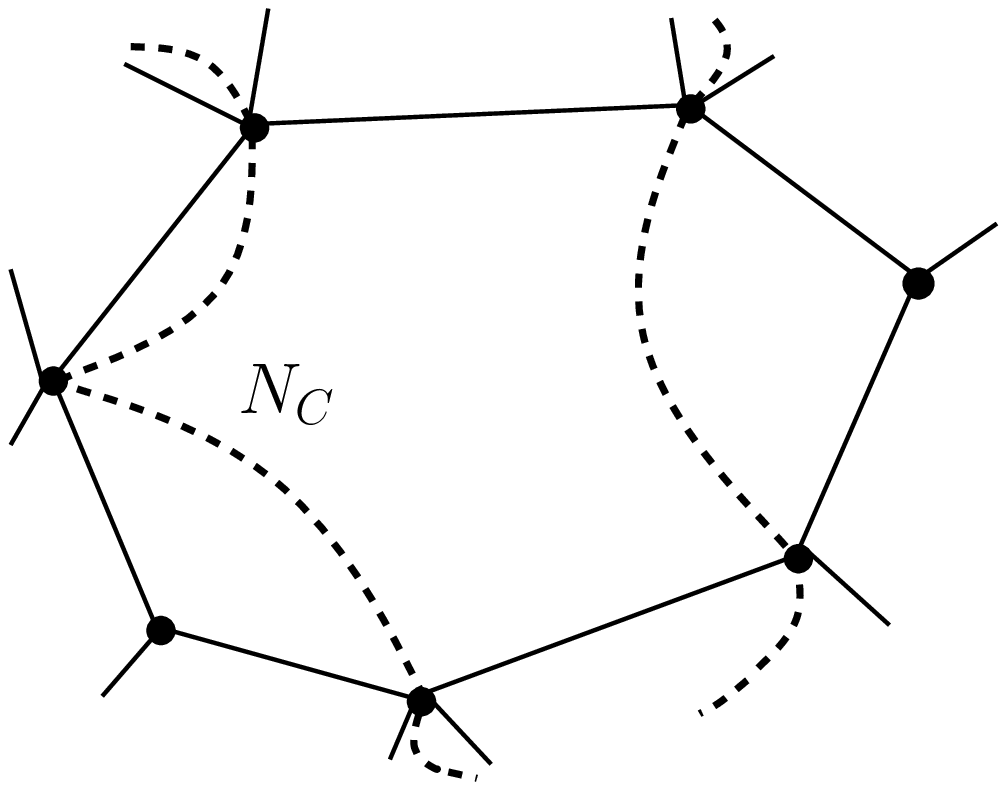}
 \includegraphics[width=0.33\textwidth]{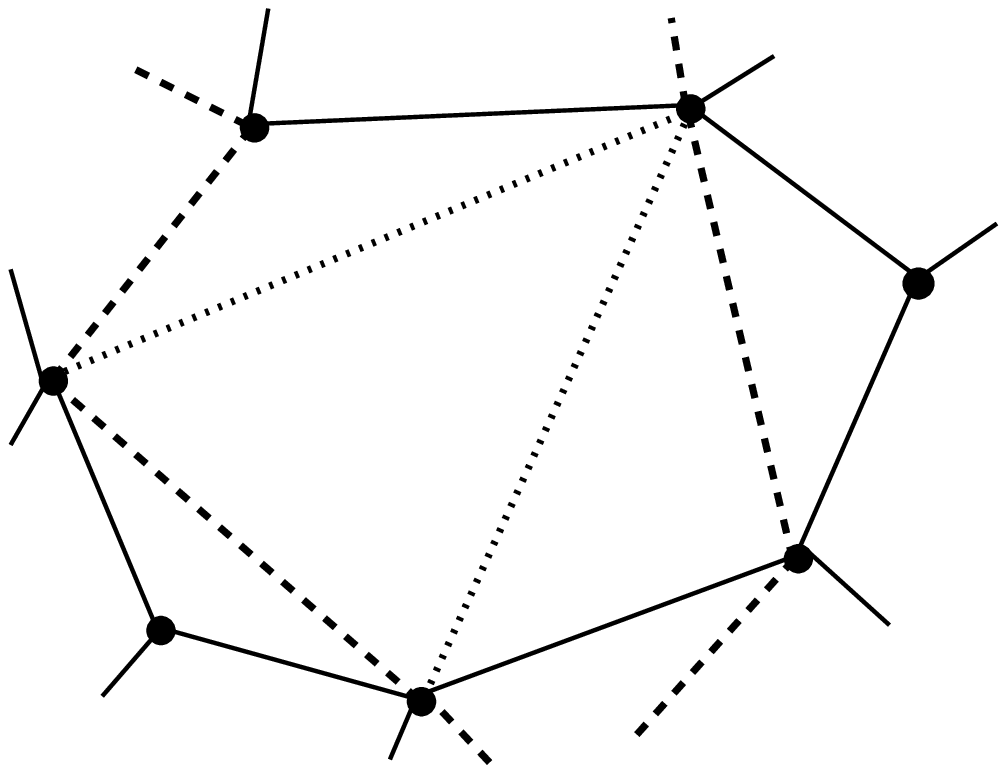}
\end{center}
\caption{{\small One the left, we see a face of our plane graph with a combinatorial noose $N_C$ indicated with dashed lines. To the right, we have mapped the noose to a cycle indicated with dashed lines and dotted lines indicating the face triangulation.}} 
\label{fig:noosecycle2}
\end{figure}
  \qed 

If $H$ is not 3-connected, a problem may occur in the last step of the previous proof when triangulating $H$.  Consider a  sub-sequence $[v_i,f_i,v_{i+1}]$ in $N_C$. We assume  there already exists an edge $e_i=\{v_i,v_{i+1}\}$ and  $v_i,v_{i+1}$ separate $H$, that is, $H$ is 2-connected. Then it may be the case that $e_i$ is not incident to $f_i$, and thus, any triangulation of $H$ has an edge crossing $N_C$. 
 We surpass this problem in the general case  by triangulating some auxiliary graph instead. For an edge $e=\{v,w\}$ of a graph $H$ we \emph{subdivide} $e$ by adding a vertex $u$ to $V(H)$ and replacing $e$ by two new edges $e_1=\{v,u\}$ and $e_2=\{u,w\}$. In a drawing of $H$, we 
  place point $u$ in the middle of the drawing of $e$ partitioning $e$ into
 $e_1$ and $e_2$. 

\begin{lemma}
\label{lem:walktriang_DAMN}
Let $H$ be plane  graph and $N_C= [v_0,f_0,v_1,f_1,\ldots,f_{k-1},v_k]$ 
 a combinatorial noose of $H$ with $|V(N_C)|>2$. Let $H^*$ be  obtained by subdividing every edge in $E(H)$.
  There exists a planar triangulation $H'$ of $H^*$  such that 
   $[v_0,v_1,\ldots,v_k]$ is a cycle in $H'$.   
\end{lemma}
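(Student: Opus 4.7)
The plan is to mimic the proof of Lemma~\ref{lem:walktriang_NEW} inside the auxiliary graph $H^*$, using the subdivision to bypass the step in which that proof relied on 3-connectivity. First I would observe that subdividing every edge of $H$ leaves the face structure intact: each face $f$ of $H$ corresponds bijectively to a face $f^*$ of $H^*$ occupying the same region of $\Sigma$, with the subdivision vertices merely appearing as additional boundary vertices of $f^*$. Consequently, $N_C$ lifts verbatim to a combinatorial noose $N_C^* = [v_0, f_0^*, v_1, f_1^*, \ldots, f_{k-1}^*, v_k]$ of $H^*$ through the same vertices $v_i$, and by Remark~\ref{obs:comb_noose} it corresponds to a genuine noose in $\Sigma$.

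Next I would follow the two-phase strategy of Lemma~\ref{lem:walktriang_NEW} applied to $H^*$ and $N_C^*$. In phase one, for each consecutive pair $v_i, v_{i+1}$ in $N_C^*$, I would add a new edge $e_i = \{v_i, v_{i+1}\}$ drawn inside $f_i^*$ and, whenever doing so splits $f_i^*$ into sub-faces $f_i^{a}$ and $f_i^{b}$, update every later occurrence of $f_i^*$ in $N_C^*$ to the sub-face that still contains the corresponding arc of the noose. The obstruction that forced the 3-connectivity assumption in Lemma~\ref{lem:walktriang_NEW}---that a pre-existing edge $\{v_i,v_{i+1}\}$ might already be embedded on the wrong side of $f_i^*$---cannot arise in $H^*$: since every edge of $H$ has been subdivided, no two original vertices of $H$ are adjacent in $H^*$, so there is no existing edge between $v_i$ and $v_{i+1}$ to conflict with, and $e_i$ can be realized as a simple arc inside $f_i^*$ following the corresponding portion of the lifted noose.

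The main technical point---and the step I expect to be the main obstacle---is the case $f_i^* = f_j^*$ for some $i \neq j$, where several new edges must be inserted into the same face and must be pairwise non-crossing. Here the third clause in the definition of a combinatorial noose, forbidding the cyclic order $(v_i, v_j, v_{i+1}, v_{j+1})$ on the boundary of $f_i = f_j$, is precisely the condition that lets the arcs of $N_C^*$, and hence the chords $e_i, e_j$, be drawn without crossings. Verifying that this non-interleaving property transfers to $H^*$ is immediate because subdivision preserves the cyclic order of the original vertices on each face boundary; what still has to be checked, however, is that the iterative splitting $f_i^* \mapsto \{f_i^{a}, f_i^{b}\}$ performed during phase one maintains the invariant that every still-to-be-inserted chord with endpoints on the current face can be drawn without crossing those already placed, which reduces to a bookkeeping argument on the cyclic orders inherited from $N_C^*$.

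After phase one, the vertices $v_0, \ldots, v_{k-1}$ together with the edges $e_0, \ldots, e_{k-1}$ form a simple cycle $[v_0, v_1, \ldots, v_k]$ in the resulting plane graph. In phase two, I would triangulate each remaining face arbitrarily to obtain a planar triangulation $H'$ of $H^*$; the edges added here lie strictly inside current faces and so cannot destroy the cycle. Applying Lemma~\ref{lem:radialwalks_NEW} to $H'$ then confirms that $[v_0, v_1, \ldots, v_k]$ is a cycle in $H'$, as required.
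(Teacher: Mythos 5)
Your proposal is correct and follows essentially the same route as the paper: observe that $N_C$ remains a combinatorial noose of $H^*$, note that subdivision guarantees no two consecutive noose vertices are adjacent in $H^*$ (so the obstruction that required 3-connectivity in Lemma~\ref{lem:walktriang_NEW} disappears), and then reuse the edge-insertion and triangulation argument of that lemma. The paper's proof is just a terser statement of exactly this reduction.
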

\proof
The combinatorial  noose $N_C$ is a combinatorial noose in $H^*$, too. As for any two consecutive vertices $v_i,v_{i+1}$ in $N_C$ there is no edge in $H^*$ and each vertex in $N_C$ is unique, we may add edges to $H^*$ as in the proof of Lemma~\ref{lem:walktriang_NEW} and triangulate $H^*$. 
 \qed


\vspace{-2mm}\paragraph{\bf Proof of Lemma~\ref{lem:numradwalks}.} If $H$ is triangulated, we have with Lemma~\ref{lem:radialwalks_NEW} that every combinatorial noose corresponds to a unique cycle in $H$.
By Proposition~\ref{totallength}, the number of cycles in $H$ is bounded by $2^{1.53 k}$. Since for every edge of a cycle in $H$, we have two choices for 
 a combinatorial noose to visit an incident face, we get the overall upper bound of $2^{2.53 k}$ on the number of combinatorial nooses.
If $H$ is plane, we have to count the triangulations either of $H$ (Lemma~\ref{lem:walktriang_NEW}) or of $H^*$ (Lemma~\ref{lem:walktriang_DAMN}).
By Proposition~\ref{lem:tutte} and the comments below it, there are at most $2^{3.24 \ell}$ non-isomorphic  triangulations on $\ell$ vertices.
 Let us denote this set of 
 triangulated graphs by $\Phi$.
We note that $H$ (resp.~$H^*$) is a subgraph of some graph of $\Phi$, say of all graphs in $\Phi_H \subseteq \Phi$ with $|\Phi_H| \geq 1$.
 Since every triangulated graph is 3-connected, we have with Theorem~\ref{prop:whitney} that every graph $H'$ in $\Phi_H$ has a unique  embedding in $\Sigma$ up to homeomorphism.
 The plane graph $H$ (resp.~$H^*$) is then a subdrawing of a drawing equivalent to an arbitrary plane embedding of $H'$ in $\Sigma$. 
 Thus, the number of triangulations times the number of combinatorial nooses in each triangulation is an upper bound on the number of combinatorial nooses in $H$, here $2^{5.77 k}$ (resp.~in $H^*$, here $2^{9.77 k}$).
\qed

 For embedded dynamic programming on a sc-decomposition $\langle T,\mu, \pi \rangle$, we can  argue with Remark~\ref{radialcycle_NEW} that if $H$ is a subdrawing of $G$, then noose $N$ formed by the middle set $\mids(e)$ is a noose of $H$, too.
Recalling Remark~\ref{obs:comb_noose}, the alternating sequence of vertices and faces of $H$ visited by $N$ forms a  combinatorial noose $N_C$ in $H$.


 This observation allows us to discuss the results from a combinatorial point of view without the underlying topological arguments. Instead of nooses we will refer to combinatorial nooses in the remaining section.

\subsection{Embedded dynamic programming}
\label{subsec:embdp}
\noindent In embedded dynamic programming, the basic difference to usual dynamic programming is that we do not check for every partial solution for a given problem if or how it lies in the graph processed so far. Instead, we check how the graph that we have processed so far is intersecting  the entire solution, that is how the graph is \emph{embedded} into our solution. 
For subgraph isomorphism, we compute  every possible way the processed subdrawing $G_{sub}$ of $G$ is embedded in the plane pattern $H$ up to homeomorphism, subject to how the bound of $G_{sub}$ intersects $H$. This bound is a combinatorial noose $N$ separating $G_{sub}$ from the rest of $G$. The number of solutions we get is bounded by the number of combinatorial nooses in $H$ we can map $N$ onto.
 We describe the algorithm in what follows. 

\vspace{-2mm}\paragraph{\bf Dynamic programming.} We  root  sc-decomposition $\langle T,\mu, \pi \rangle$ at some node $r \in V(T)$. 
For each edge $e \in T$, let $L_e$ be the set of leaves of the subtree rooted at $e$. 
The subgraph $G_e$ of $G$ is induced by the edge set $\{\mu(v) \mid v \in L_e\}$. The vertices of $\mids(e)$ form a combinatorial noose $N$ that separates $G_e$ from the residual graph. 

Assuming $H$ is a subdrawing of $G$, the basic idea of embedded dynamic programming is that we are interested 
  in how the vertices of the combinatorial noose $N$ are intersecting faces and vertices of $H$.
  Since every noose in $G$ is a noose in $H$, we can map $N$ to a combinatorial noose $N^H$ of $H$,  bounding (clockwise) a unique subgraph $H_{sub}$ of $H$. 

In each step of the algorithm, all  solutions for a sub-problem in $G_e$ are computed, namely  all possibilities of how $N$ is mapped onto a combinatorial noose $N^H$ in $H$ that separates $H_{sub}$ from the rest of $H$, where   $H_{sub} \subseteq H$ is isomorphic to subgraphs of $G_e$.
For every middle set, we store this information in an array. It is updated in a bottom-up process starting at the leaves of $\langle T,\mu,\pi\rangle$. During this updating process it is guaranteed that the `local' solutions for each subgraph associated with a middle set of the sc-decomposition are combined into a `global' solution for the overall graph $G$.



\vspace{-2mm}\paragraph{\bf Step 0: Initializing the middle sets.}
 Let $G$ be a plane graph with a  rooted   sc-decomposition $\langle T,\mu, \pi \rangle$ and let $H$ be a plane pattern. 
 For every  middle set $\mids(e)$ of $\langle T,\mu, \pi \rangle$ let $N$ be the associated combinatorial noose in $G$ with face-vertex sequence of $F(N) \cup V(N)$. 
  Let $\mathfrak{L}$ denote the set of all combinatorial nooses of $H$ whose length is at most the length of $N$.
  We now want to map $N$ order preserving to each $N^H \in \mathfrak{L}$. We  map vertices of $N$ to both vertices and faces of $H$.
 Therefore, we consider partitions of $V(N)=V_1(N) \dot{\cup} V_2(N)$  where vertices in $V_1(N)$ are mapped to vertices of $V(H)$ and vertices in $V_2(N)$ to faces of $F(H)$. 
 We define a mapping $\gamma : V(N) \cup F(N)  \rightarrow V(H) \cup F(H)$
  relating $N$  to the combinatorial nooses in $\mathfrak{L}$.
 For every $N^H \in \mathfrak{L}$  on faces and vertices of set $F(N^H) \cup V(N^H)$ and for every partition $V_1(N) \dot{\cup} V_2(N)$ of $V(N)$ 
  mapping $\gamma$ 
  \emph{is valid} if
\squishlist 
\item[$a)$] $\gamma$ restricted to $V_1(N)$ is a bijection to $V(N^H)$;
\item[$b)$] for every $v \in V_1(N)$ we have  $\gamma(v) \in V(N^H)$, and for every $v \in V_2(N)$ we have $\gamma(v) \in F(N^H)$;
 \item[$c)$] for every $f \in F(N)$ we have $\gamma(f) \in F(N^H)$;
\item[$d)$] for every pair $v_h,  v_j \in V(N)$ such that $[\gamma(v_h),f ,\gamma(v_j)]$ is a subsequence of $N^H$ for a face $f \in F(N^H)$ 
  and for every vertex $v_i \in V(N)$ with $v_i$ lying inbetween $v_h$ and $v_j$ in the sequence $N$, we have $\gamma(v_i) = f$;
  \item[$e)$] for every $v_i \in V(N)$  and subsequence $[f_{i-1},v_i ,f_i]$  of 
  $N$, if 
  $\gamma(v_i) \in F(N^H)$, we have $\gamma(f_{i-1})=\gamma(v_i)=\gamma(f_i)$;
\item[$f)$] for every pair of vertices $w_i,w_j$ in $V(N^H)$: if $\{w_i,w_j\} \in E(H)$ then $\{\gamma^{-1}(w_i),\gamma^{-1}(w_j)\} \in E(G)$. 
\squishend 
 
 \noindent Items $a)$ to $c)$ say where to map the faces and vertices of $N$ to. Items $d)$ and $e)$ make sure that if two vertices $v_h,v_j$ in sequence $N = [\ldots,v_h,\underline{\ldots},v_j,\ldots]$ are mapped to two vertices $w_i,w_{i+1}$ that appear in sequence $N^H$ as $[\ldots,w_i,f_i,w_{i+1},\ldots]$ 
  then every face and vertex inbetween $v_h,v_j$  in sequence $N$ (here underlined) is mapped to face $f_i$. Item $f)$ rules out the invalid solutions, that is, we do not map a pair of vertices in $G$ that have no edge in common to the endpoints of an edge in $H$. We  do so because if $H$ is a subdrawing of $G$ then an edge in $H$ is an edge in $G$, too.   
For an illustration, see Figure~\ref{fig:dp}.
 \begin{figure}[h]
\begin{center}
\includegraphics[width=0.3\textwidth]{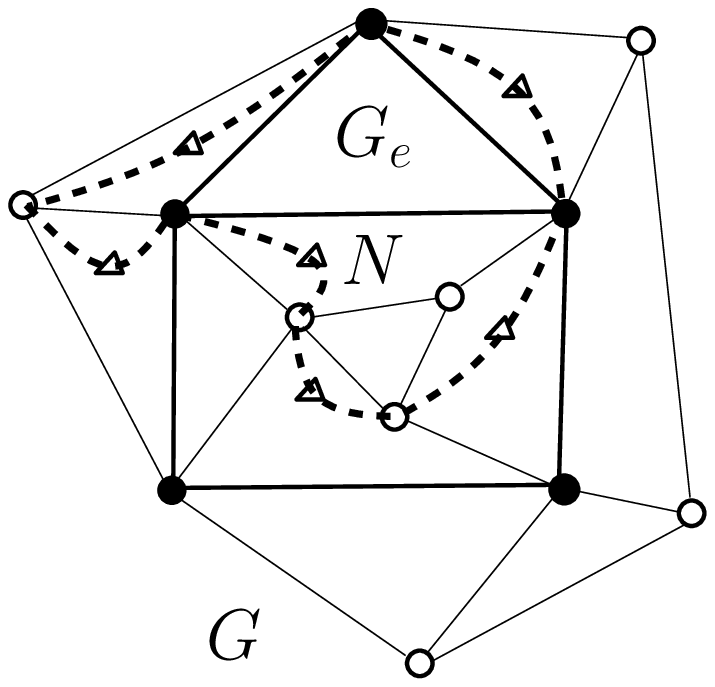}
 \includegraphics[width=0.3\textwidth]{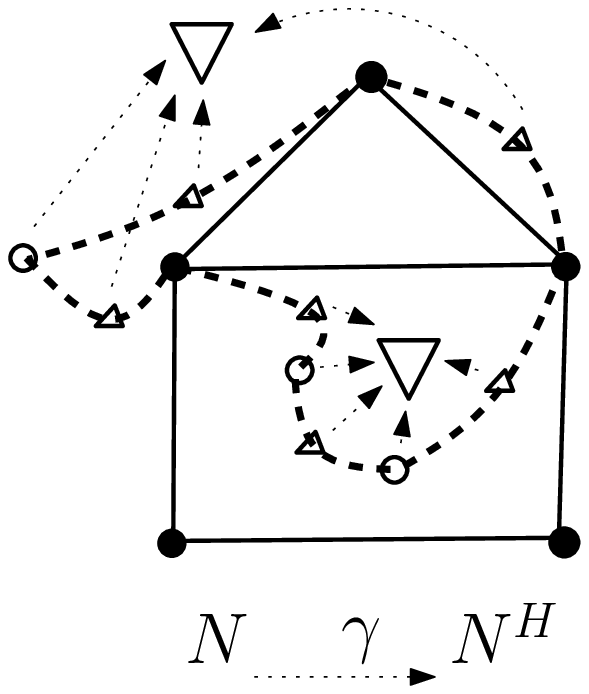}
 \includegraphics[width=0.18\textwidth]{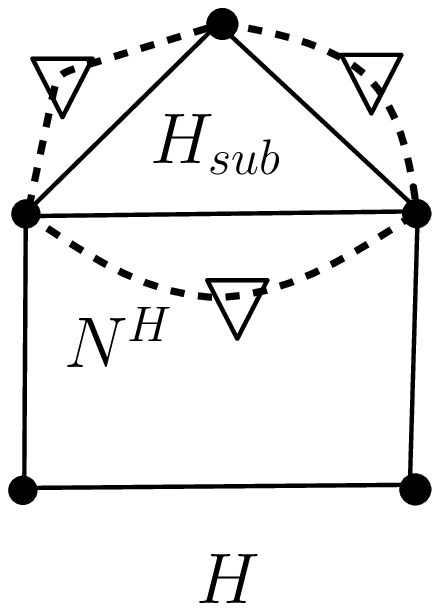}
\end{center}
\caption{{\small On the left, we have a plane graph $G$ with a subdrawing $H$ emphasized. A combinatorial noose $N$ separating subgraph $G_e$ is indicated by dashed lines. The vertices of $N$ are full and empty circles and the faces triangles.
  In the middle, we have $H$ and indicate to which faces (big triangles) of $H$ vertices and faces of $N$ are mapped by $\gamma$. This gives us combinatorial noose $N^H$ on the right, separating subgraph $H_{sub}$.}} 
\label{fig:dp}
\end{figure}  
 
 We assign an array $A_e$ to each $\mids(e)$ consisting of all tuples $\langle N^H,\gamma(N) \rangle$ each representing  a valid mapping $\gamma(N)$ from combinatorial noose $N$ corresponding to $\mids(e)$ to a combinatorial noose $N^H \in \mathfrak{L}$. 
 The vertices and faces of $N$ are oriented clockwise around $G_e$. 
  Without loss of generality, we assume for every $\langle N^H,\gamma(N) \rangle \in A_e$ the orientation of $N^H$ to be clockwise around the subgraph $H_{sub}$ of $H$ isomorphic to a subgraph of $G_e$.
 
\vspace{-2mm} \paragraph{\bf Step 1: Update process.} 
 We update the arrays of the middle sets in post-order manner from the leaves of $T$ to root $r$. 
  In each dynamic programming step, we compare the arrays of  two middle sets $\mids(e),\mids(f)$ in order to 
   create a new array assigned to the middle set $\mids(g)$, where $e,f$ and $g$ have a vertex of $T$ in common.
   From~\cite{DornPBF09} we know about a special property of sc-decompositions: namely that  the combinatorial noose $N_g$ is formed by the symmetric difference of the combinatorial nooses $N_e,N_f$ 
    and that $G_g= G_e \cup G_f$. In other words, we are ensured that if two solutions on  $G_e$ and $G_f$  bounded by $N_e$ and $N_f$ \emph{fit together}, then they form a new solution on $G_g$ bounded by $N_g$.
    We now determine when two solutions represented as tuples in the arrays $A_e$ and $A_f$ fit together. 
 We update two tuples $\langle N^H_e,\gamma_e(N_e) \rangle \in A_e$ and $\langle N^H_f,\gamma_f(N_f) \rangle \in A_f$ to a new tuple in $A_g$ if
 \squishlist
 \item for all $v \in V(N_e) \cap V(N_f)$,   $\gamma_e(v) = \gamma_f(v)$;
 \item for all $f \in F(N_e) \cap F(N_f)$,    $\gamma_e(f) = \gamma_f(f)$;
  \item for the subgraph $H_e$ of $H$ separated by $N^H_e$ and the  subgraph $H_f$ of $H$ separated by $N^H_f$, we have that 
   $E(H_e) \cap E(H_f) = \emptyset$  and
  $V(H_e) \cap V(H_f) \subseteq \{ \gamma(v) \mid  v \in  V(N_e) \cap V(N_f)\}$.
 \squishend
  If $N_e$ and $N_f$ fit together, we get a valid mapping $\gamma_g: N_g \rightarrow N^H_g$ as follows:
 \squishlist
 \item for every $x \in  ( \, V(N_e) \cup  F(N_e) \,) \cap ( \, V(N_f) \cup  F(N_f) \,) \cap  (\, V(N_g) \cup  F(N_g) ,)$ we have $\gamma_e(x) = \gamma_f(x) = \gamma_g(x)$; 
 \item for every $y \in  ( \, V(N_e) \cup  F(N_e) \,) \setminus ( \, V(N_f) \cup  F(N_f) \,)$ we have $\gamma_e(y)  = \gamma_g(y)$;
 \item for every $z \in  ( \, V(N_f) \cup  F(N_f) \,) \setminus ( \, V(N_e) \cup  F(N_e) \,)$ we have $\gamma_f(z)  = \gamma_g(z)$.
 \squishend
   We have that $\gamma_g$ is a valid mapping from $N_g$ to the combinatorial noose $N^H_g$ that bounds  subgraph $H_g = H_e \cup H_f$.
  Thus, we add  tuple $\langle N^H_g,\gamma_g(N_g) \rangle$ to  array $A_g$.
\vspace{-2mm}\paragraph{\bf Step 2: End of DP} If, at some step, we have a solution where the entire  subgraph $H$ is formed, we exit the algorithm confirming. That is, if $H = H_e \cup H_f$ and $H_i$ is bounded by $N_i$ (for both $i\in\{e,f\}$) 
then the combinatorial noose $N_g$  is bounding the subgraph of $G$ isomorphic to $H$.
  We are able to output this subgraph  by reconstructing the solution top-down in $\langle T,\mu, \pi \rangle$.
  If at root $r$ no subgraph isomorphic to $H$ has been found, we output 'FALSE'. 
  
\vspace{-2mm}\paragraph{\bf Correctness of DP} 
Let plane graph $H$ be a subdrawing of $G$.
We have seen already in Step 0 how we  map every combinatorial noose of $G$ that identifies a separation of $G$ via a valid mapping $\gamma$ to a combinatorial noose of $H$ determining a separation of $H$.
 Every edge of $H$ is bounded by a combinatorial noose $N^H$ of length two, which is determined by tuple $\langle N^H,\gamma(N) \rangle$ in an array assigned to a leaf edge of $T$.
We need to show that Step 1 computes a valid solution for $N_g$ from $N_e$ and $N_f$ for incident edges $e,f,g$.
We note that the property that the symmetric difference of the combinatorial nooses $N_e$ and $N_f$ forms a new combinatorial noose $N_g$ is passed on to the combinatorial nooses  $N^H_e,N^H_f$ and   $N^H_g$ of $H$, too.
If the two solutions fit together, then $H_e$ of $H$ separated by $N^H_e$ and subgraph $H_f$ of $H$ separated by $N^H_f$ only intersect in the image of $V(N_e) \cap V(N_f)$. We may observe that  $N^H_e$ and $N^H_f$ intersect in a continuous alternating subsequence  with order reversed to each other, i.e., $N^H_e\mid_{N_e \cap N_f} = \overline{N^H_f}\mid_{N_e \cap N_f}$, where $\overline{N^H}$ means the reversed sequence $N^H$.
Since every oriented $N^H$ identifies uniquely a separation of $E(H)$, we can easily determine if two tuples $\langle N^H_e,\gamma_e(N_e) \rangle \in A_e$ and $\langle N^H_f,\gamma_f(N_f) \rangle \in A_f$  fit together and form a new subgraph of $H$. If $H$ is a subdrawing of $G$, then at some step we will enter Step 2 and produce the entire $H$.

\vspace{-2mm}\paragraph{\bf Running time analysis.}  
We first give an upper bound on the size of each array. The number of combinatorial nooses in $\mathfrak{L}$ we are considering is bounded by  the total number of combinatorial nooses in $H$, which is $2^{O(|V(H)|)}$ by Lemma~\ref{lem:numradwalks}.
 The number of partitions of vertices of any combinatorial noose $N$ is bounded by $2^{|V(N)|}$. Since the order of both $N^H$ and $N$ is given 
  we only have $2 |V(H)|$ possibilities to map vertices of $N$ to $N^H$, once the vertices of $N$ are partitioned. 
   Thus, in an array $A_e$ we may have up to $2^{O(|V(H)|)} \cdot 2^{|V(N)|} \cdot |V(H)|$ tuples $\langle N^H_e,\gamma(N_e) \rangle$. 
   We first create all tuples in the arrays assigned to the leaves. Since middle sets of leaves only consist of an edge in $G$, we get arrays of size $O(|V(H)|^2)$ which we compute in the same asymptotic running time.
   When updating middle sets $\mids(e),\mids(f)$, we  compare every tuple of one array $A_e$ to every tuple in array $A_f$ to check if two tuples fit together. We can compute the unique subgraph $H_e$ (resp.~$H_f$) described by a tuple in $A_e$ (resp.~$A_f$), compare two tuples in $A_e,A_f$ and create a new tuple in $A_g$ in time linear in the order of $V(N)$ and $V(H)$. Since the size of $A_g$ is bounded by $2^{O(|V(H)|)} \cdot 2^{O(|V(N)|)}$, the update process for two middle sets takes the same asymptotic time.
    Assuming  sc-decomposition $\langle T,\mu,\pi\rangle$ of $G$ has  width $\omega$ and $|V(H)| \leq \omega$, we get the following result.

\begin{lemma}
 For a plane graph $G$ with a given sc-decomposition $\langle T,\mu,\pi\rangle$ of $G$ of width $w$ and a plane pattern $H$ on $k \leq w$ vertices we can search for a subdrawing of $G$ equivalent to $H$ in time $2^{O(w)} \cdot n$.
\end{lemma}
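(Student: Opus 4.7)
The plan is to combine three ingredients developed earlier in this section: the bound on combinatorial nooses in $H$ (Lemma~\ref{lem:numradwalks}), the structural fact that the sc-decomposition $\langle T,\mu,\pi\rangle$ has $O(n)$ edges, and the correctness of the dynamic programming in Steps~0--2.

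First I would bound the size of each array $A_e$. By Lemma~\ref{lem:numradwalks}, the set $\mathfrak{L}$ of combinatorial nooses of $H$ of length at most $|V(N)|$ has cardinality $2^{O(k)}$. For a fixed $N^H \in \mathfrak{L}$, specifying a valid mapping $\gamma$ amounts to choosing a partition $V(N)=V_1(N)\dot{\cup}V_2(N)$ and then an order-preserving assignment of $V_1(N)$ into $V(N^H)$. Because the cyclic order $\pi$ on $\mids(e)$ and the clockwise orientation of $N^H$ are both fixed, once the partition is chosen only $|V(N^H)| \leq k$ rotational offsets remain. This yields
\[
|A_e| \leq 2^{O(k)} \cdot 2^{|V(N)|} \cdot k \;=\; 2^{O(w)},
\]
using $k \leq w$ and $|V(N)|\leq w$. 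For a leaf edge $e$ the middle set corresponds to a single edge of $G$, so $A_e$ can be constructed directly within the same bound.

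Next I would bound the work at an internal dynamic programming step. At a node with incident edges $e,f,g$, I iterate over all pairs in $A_e \times A_f$, and in time polynomial in $w$ check the three fitting conditions from Step~1 and assemble the new tuple for $A_g$. Each step therefore costs $|A_e|\cdot|A_f|\cdot\mathrm{poly}(w) = 2^{O(w)}$. Validity of the produced tuple follows from the symmetric-difference property of sphere-cut decompositions inherited from~\cite{DornPBF09} and lifted to $H$ via $\gamma$, as already discussed in the \emph{Correctness of DP} paragraph. Summing over the $O(n)$ edges of $T$ gives a total running time of $2^{O(w)}\cdot n$.

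The main obstacle is making the size bound on $A_e$ airtight: a naive count that independently chooses which vertex of $N$ maps to which vertex of $N^H$ would contribute a factorial factor and destroy the single-exponential dependence. The essential observation is that both $N$ and $N^H$ carry a fixed cyclic orientation, so an order-preserving injection $V_1(N)\hookrightarrow V(N^H)$ is determined by the partition $V_1(N)\dot{\cup}V_2(N)$ together with a single rotational offset. Once this is pinned down, the remaining verifications (validity items $a)$--$f)$ for mappings at leaves, and the three compatibility conditions at merges) are routine and polynomial-time, and the running time bound $2^{O(w)}\cdot n$ follows immediately.
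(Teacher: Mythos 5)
Your proposal is correct and follows essentially the same route as the paper's own running-time analysis: you bound $|A_e|$ by the product of the number of combinatorial nooses of $H$ (Lemma~\ref{lem:numradwalks}), the $2^{|V(N)|}$ partitions of $V(N)$, and the $O(k)$ rotational offsets fixed by the cyclic orders, and then charge $|A_e|\cdot|A_f|\cdot\mathrm{poly}(w)$ per merge over the $O(n)$ edges of $T$. The observation you flag as the main obstacle --- that the fixed cyclic orientations reduce the choice of order-preserving assignment to a single offset, avoiding a factorial blow-up --- is exactly the point the paper makes as well.
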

 


\subsection{The algorithm}
\label{subsec:alg}
\noindent We present the overall algorithm for solving {\sc Plane Subgraph Isomorphism} with running time stated in Theorem~\ref{theor:planeSI}.

\begin{algorithm}[h]
    \dontprintsemicolon
    \Input{Plane graph $G$; Plane pattern $H$ of order $k$.}
    \BlankLine\;
    \lnl{every_v}
    Choose an arbitrary vertex $v$ in $G$.\;
  \lnl{ln:partition}
   Partition $V(G)$ into $S_0 \cup S_1 \cup \ldots \cup S_{\ell}$ with $S_i = \{w \in V(G): \mbox{dist}(v,w)=i\}$\; 
   \lnl{ln:subgraph}
   \For{every $G_i= G[S_i\cup\ldots\cup S_{i+k}]$ with $0 \leq i \leq \ell - k$}{ 
 \lnl{ln:sc-decomp}
  Compute sc-decomposition $\langle T,\mu, \pi \rangle$ of $G_i$.\; 
   \lnl{ln-dp}
   Do embedded dynamic programming on $\langle T,\mu, \pi \rangle$  to find a subgraph of $G_i$ isomorphic to $H$ and intersecting $S_i$.\;}
    \caption{Plane subgraph isomorphism: PLSI.}
    \label{alg:dp}
\end{algorithm}


Partitioning the vertex set in Line~\ref{ln:partition} of Algorithm~\ref{alg:dp} PLSI, is a similar approach to the well-known Baker-approach~\cite{Baker94}. Every vertex set $S_i$ contains the vertices of distance $i$ to the chosen vertex $v$.
$S_0=\{v\}$ and $\ell$ is the maximum distance in $G$ from $v$. The graph $G_i$ in Line~\ref{ln:subgraph} is induced by the sets $S_i,\ldots,S_{i+k}$.  
As in~\cite{Eppstein99}, we may argue that every vertex in $G$ appears in at most $k$ subgraphs $G_i$. 
This keeps our running time linear in $n$.
We can apply  Lemma~\ref{lem:tamaki}  to each $G_i$ in Line~\ref{ln:sc-decomp} to a compute sc-decomposition $\langle T,\mu, \pi \rangle$  of width $\leq 2k$, by adding a root vertex $r$ for the BFS tree and make $r$ adjacent to every vertex in $S_i$.
 The dynamic programming approach  can easily be turned into an algorithm  counting subgraph isomorphisms (similar to~\cite{Eppstein99}), by
  using a counter in the dynamic programming. Using an inductive argument, for every subgraph $G_i$ in Line~\ref{ln-dp} we only compute subgraphs intersecting with vertices in $S_i$ and thus omit double-counting. We can also adopt our technique to list the subgraphs of $G$ isomorphic to $H$.

\section{Planar subgraph isomorphism}
\label{sec:planarsubiso}

\noindent Now we consider the case when both pattern $H$ and host graph $G$ are planar but not embedded.
  However, we observe that if $H$ is isomorphic to a subgraph of $G$, then for every planar embedding of $G$ there exists a drawing of $H$ that is equivalent to a subdrawing of $G$. Hence, we may simply embed $G$ planarly, and run the algorithm of the previous section for all  non-equivalent embeddings of $H$.
  The following lemma tells us that the number of times we call the algorithm is restricted, too.  

\begin{lemma}
 \label{lemma:equemb}
Every planar $k$-vertex graph  has $2^{O(k)}$ non-equivalent embeddings in $\Sigma$. 
\end{lemma}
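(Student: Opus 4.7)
The plan is to bound the number of non-equivalent embeddings of $H$ by constructing an injection into a small set of maximal planar graphs, mirroring the triangulation strategy already used in the proof of Lemma~\ref{lem:numradwalks}. Given any planar embedding $E$ of $H$, I will build an auxiliary triangulation $H'_E$ on $O(k)$ vertices such that $E$ appears as a subdrawing of the unique embedding of $H'_E$ guaranteed by Theorem~\ref{prop:whitney}. Counting the candidate graphs $H'_E$ via Proposition~\ref{lem:tutte} will then deliver the desired $2^{O(k)}$ bound.

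Concretely, I would first subdivide every edge of $H$ once to obtain $H^*$, a plane graph on at most $k + |E(H)| \leq 4k - 6$ vertices, since $|E(H)| \leq 3k - 6$ for planar $H$. The embedding $E$ induces a canonical embedding $E^*$ of $H^*$ with the same faces. Next, in each face of $H^*$ I would add edges (and, where the face boundary is not a simple cycle, a single auxiliary vertex of degree equal to the boundary length, in the spirit of Lemma~\ref{lem:walktriang_DAMN}) until every face is a triangle. The resulting plane graph $H'_E$ is a maximal planar graph on $O(k)$ vertices and still contains $E$ as a subdrawing after forgetting the added vertices and edges.

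By Theorem~\ref{prop:whitney}, the abstract graph underlying $H'_E$ has a unique embedding in $\Sigma$ up to homeomorphism. Therefore $E$ is recovered from the abstract graph $H'_E$ simply as the subdrawing of this unique embedding restricted to $V(H) \cup E(H)$. This means the map $E \mapsto H'_E$ (viewed as sending embeddings to abstract maximal planar graphs) is injective: if $H'_{E_1}$ and $H'_{E_2}$ agree as abstract graphs, their unique embeddings coincide and so do the restrictions $E_1, E_2$. Since $|V(H'_E)| = O(k)$, Proposition~\ref{lem:tutte} gives at most $2^{O(k)}$ such maximal planar graphs, yielding the claimed bound.

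The main obstacle is performing the triangulation step cleanly when $H$ fails to be 2-connected, so that $H'_E$ remains a simple maximal planar graph of size $O(k)$ rather than a multigraph or one that blows up in size. Subdividing every edge together with the controlled insertion of one auxiliary vertex per non-simple face—closely following the construction in Lemma~\ref{lem:walktriang_DAMN}—should handle this while keeping the vertex count linear in $k$, which is all the counting step needs.
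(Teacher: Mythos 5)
Your construction (triangulate an auxiliary graph obtained from the embedding, then count via Proposition~\ref{lem:tutte} and Theorem~\ref{prop:whitney}) is the same strategy the paper uses, but the step where you conclude is not sound as written. You claim the map $E \mapsto H'_E$ into \emph{abstract} maximal planar graphs is injective because ``if $H'_{E_1}$ and $H'_{E_2}$ agree as abstract graphs, their unique embeddings coincide and so do the restrictions.'' To recover $E$ by restricting the unique embedding of $H'_E$, you must know \emph{which} vertices and edges of $H'_E$ form the copy of $H$ (resp.\ $H^*$); an abstract isomorphism between $H'_{E_1}$ and $H'_{E_2}$ need not carry one copy of $H$ onto the other, so two non-equivalent embeddings may well produce isomorphic triangulations. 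If you instead try to keep the labelling to make the restriction canonical, you are counting labelled triangulations, which costs an extra factor of order $(ck)! = 2^{\Theta(k\log k)}$ and destroys the single-exponential bound.

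The repair is to abandon injectivity and count the other way around, which is exactly what the paper does: every embedding of $H$ is a subdrawing of the unique embedding of \emph{some} triangulation $H'$ on $O(k)$ vertices, and within a fixed $H'$ the subdrawing is determined by the subset of edges (and, in your variant, vertices) of $H'$ that it uses. So the number of non-equivalent embeddings is at most $(\text{number of non-isomorphic triangulations on } O(k) \text{ vertices}) \times (\text{number of edge subsets of each})$, and both factors are $2^{O(k)}$ by Proposition~\ref{lem:tutte} and Euler's formula. Your subdivision step is harmless (and indeed more careful than the paper about non-$2$-connected patterns), but it is the surjectivity of the map from (triangulation, edge subset) pairs onto embeddings, not injectivity of your map, that carries the count.
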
 
\proof By Proposition~\ref{lem:tutte}, there are at most $2^{3.24 k}$ non-isomorphic maximal planar graphs on $k$ vertices.
 Every planar graph $H$ is a subgraph of a maximal planar graph.
 Every maximal planar graph has a unique embedding which is a triangulation.
  Thus, every embedding of $H$ is a subdrawing of a triangulation of $H$. 
The number of such subgraphs is bounded by the number of edge subsets of $H'$, since for every edge subset of $S \subseteq E(H')$ of same cardinality as $E(H)$, $H'[S]$ may be isomorphic to $H$.  In this case, $H'[S]$ then gives a possible  embedding of $H$ in $\Sigma$. 
Hence, the number of  embeddings of $H$ in $\Sigma$ up to homeomorphism is bounded by $2^{6.24 k}$.
\qed

\vspace{-2mm} \paragraph{\bf The whole algorithm}

\begin{algorithm}[h]
    \dontprintsemicolon
    \Input{Planar graph $G$, Planar pattern $H$ of size $k$.}
    \BlankLine\;
    Compute a planar embedding of $G$.\;
     \lIf{$H$ triangulated or 3-connected} {Return PLSI$(G,H)$.}\;
\For{every non-equivalent embedding $I$ of $H$}{ 
 Return PLSI$(G,I)$.}\;
    \caption{Planar subgraph isomorphism.}
    \label{alg:main}
\end{algorithm}

We compute in Algorithm~\ref{alg:main} every non-equivalent embedding of $H$ using the constructive proof of Lemma~\ref{lemma:equemb}. 
That is, we compute the set $\mathcal{H}$ of non-isomorphic maximal planar graphs in time proportional to its size using the algorithm in~\cite{LiN01}. 
  For every graph $H' \in \mathcal{H}$ and every subdrawing $I$ of $H'$ we check whether  $I$ is isomorphic to $H$ by using the linear time algorithm for planar graph isomorphism in~\cite{HopcroftW74}\footnote{We get a list of embeddings of $H$, from which we can delete equivalent drawings by a modification of the algorithm in~\cite{HopcroftW74}---namely isomorphism test for face-vertex graphs.}  .
  By Lemma~\ref{lemma:equemb}, we then call  Algorithm~\ref{alg:dp}   $2^{O(k)}$ times, for each plane drawing $I$ isomorphic to $H$.  This ensures us that Algorithm~\ref{alg:main} has running time as stated in Theorem~\ref{bigtheorem}. 


\section{Conclusion}

\noindent We have shown how to use topological graph theory to improve the results on the already mentioned variations of {\sc Planar Subgraph Isomorphism}, solving the open problems posed in~\cite{Eppstein99} and~\cite{DornPBF09}. 
With the results of~\cite{Eppstein00},~\cite{Eppstein99} extends the feasible graph class from planar graphs to apex-minor-free graphs. 
 This cannot be done with the tools presented here. However,~\cite{DornFT07} devise a truly subexponential algorithm for {\sc $k$-Longest Path} in $H$-minor-free graphs and thus apex-minor-free graphs, employing the structural theorem of Robertson and Seymour~\cite{RobertsonS03} and the results of~\cite{DemaineHK05,DawarGK2007,DemaineHT05}. Can the structure of $H$-minor-free graphs, be exploited for our purposes?  

It seems unlikely that our work can be extended to obtain a subexponential algorithm. 
The first reason, mentioned in the introduction, is that Bidimensionality applies to subgraphs with minor properties rather than to general subgraphs. Secondly,  our enumerative bounds are either tight or of lower bound $2^{\Omega(k)}$. We want to pose the open problem: Is  {\sc Plane Subgraph Isomorphism}   solvable in time~$2^{o(k)}n^{O(1)}$?

\medskip
 
 \noindent{\bf Acknowledgments}. 
The author thanks Paul Bonsma, Holger Dell and Fedor Fomin for discussions and comments of great value to the presentation of these results.


\appendix

\section{SC-Decompositions in linear time}
\label{app:tamaki_alg}

For a plane graph $G$ we define a \emph{radial
graph}~$R_G$ as follows: $R_G$ is a bipartite graph with the
bipartition $F(G) \cup V(G)$. A vertex $v\in V(G)$ is adjacent in
$R_G$ to a vertex $f\in F(G)$ if and only if~the vertex $v$ is
incident to the face $f$ in the drawing of~$G$.

Let $G$ be a plane graph with some vertex $r \in V(G)$  and $R_G$ its radial graph. 
Let $T$ be a spanning tree of $G$ rooted at $f$ that is determined by breadth first search.
Choose a face $f$ adjacent to root $r$. 
If the longest path from $r$ to a leaf of $T$ is $\ell$ then the distance $d_f$ in the radial graph $R_G$ from vertex $f$ to any other (face)vertex $x$ is at most $2 \ell +1$. This is due to the fact that  there exists an edge $\{f,r\}$ in $R_G$, and for every edge in $T$ there is a detour in $R_G$ of at most two edges.  
  \cite{Tamaki03} show how to obtain a branch decomposition of width $d_f$ out of a BFS spanning tree rooted at $r$ of the radial graph\footnote{In fact the authors construct a carving decomposition out of the spanning tree of the dual graph of the radial graph that one obtains after deleting the dual edges of the BFS spanning tree.}. 
  Set $f$ to be the outer face of $G$. Let $T$ be a BFS spanning tree of   
 $R_G$ rooted at $f$ and let $\ell$ be the maximum distance in $T$ from $r$ to a leaf. 
We give now a compact presentation of the algorithm of~\cite{Tamaki03} and show that it translates to constructing a sphere-cut decomposition of $G$.
We define \emph{contracting} a vertex $v$ as identifying all vertices of $N(v)$ to a single vertex and deleting $v$.

\begin{algorithm}[h]
    \dontprintsemicolon
    \Input{ Plane  graph $G$, face $f \in F(G)$ , radial graph $R_G$ .}
    \Output{Branch-decomposition of $G$ of width at most $2 \ell +1$.}
    \BlankLine\;
    \lnl{BFS}
     Construct embedded BFS tree $T_S$ of $R_G$ at root $f$.\;
     \lnl{sc_line1}
      Set $T^* = R_G^* \setminus E(T_S)^*$ the dual graph of $R_G$ without the edge set dual to $T_S$\;
       \lnl{sc_line2}
       \For{every node $v$ in $T^*$}{
        \lnl{sc_line3}
       \lIf{$\deg(v)_T= 1$}{\;
        \lnl{sc_line4}
        \quad create $C_v$  a single edge, with nodes labeled $\{v\}$ and $\{N(v)\}$;}\;
     \lnl{sc_line5}   \lElse \;
      \lnl{sc_line6}  \quad create embedded ternary tree $C_v$ with $|N(v)|+1$ leaves;\;
        \lnl{sc_line7} \quad label one leaf with $\{v\}$ and the other leaves with $N(v)$ keeping a clockwise order.\; }
         
        \lnl{sc_line8}  (\emph{in post order}) \For{every edge $\{u,v\}$ in $T^*$, where $v$ is the parent node}{
        \lnl{sc_line10} \quad  combine $C_u$ and $C_v$ by identifying leaf $\{v\}$ in $C_u$ with leaf $\{u\}$ in $C_v$, and\;  
          \lnl{sc_line11} \quad  contract  the identified node and set new tree to be named $C_v$.}\;
            \lnl{sc_line12} Return ($C_r$ (for $r$ root of $T^*$)).\;   
     
    \caption{Computing SC-decomposition.}
    \label{alg:alg_tamaki}
\end{algorithm}

In~\cite{Tamaki03}, Algorithm~\ref{alg:alg_tamaki} is  proved  to compute a branch decomposition of planar graph $G$ of width $2 \ell +1$ in time $O( \ell n)$. 
 \begin{claim}
 Algorithm~\ref{alg:alg_tamaki} computes a sc-decomposition of $G$.
 \end{claim}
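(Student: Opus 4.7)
The plan is to exhibit, for every edge $e$ of the branch decomposition $C_r$ returned by Algorithm~\ref{alg:alg_tamaki}, an explicit noose $N_e$ in $G$ whose vertex set equals $\mids(e)$ and which separates $G$ exactly as $C_r \setminus \{e\}$ separates its leaves. The main tool is the correspondence between alternating cycles in the bipartite radial graph $R_G$ and nooses of $G$, combined with planar duality between cycles in $R_G$ and cocycles in $R_G^*$.

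First I would classify the edges of $C_r$. In view of lines~\ref{sc_line10}--\ref{sc_line11}, each edge of $C_r$ is either of \emph{type (i)}: produced by contracting the identification of leaves $\{v\}$ and $\{u\}$ of two neighbouring local trees $C_u, C_v$, and therefore in bijection with an edge of $T^*$; or of \emph{type (ii)}: an internal edge inherited from some local ternary tree $C_v$ built in lines~\ref{sc_line6}--\ref{sc_line7}.

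For type (i), let $\varepsilon^*$ be the edge of $T^*$ corresponding to $e$ and let $\varepsilon$ be its dual edge in $R_G$. Since $T^* = R_G^* \setminus E(T_S)^*$, we have $\varepsilon \notin T_S$; hence adding $\varepsilon$ to the spanning tree $T_S$ produces a unique fundamental cycle $C_\varepsilon$ in $R_G$, which is simple. As $R_G$ is bipartite with parts $V(G)$ and $F(G)$, the cycle $C_\varepsilon$ alternates between the two parts; by rerouting each $F(G)$-vertex of $C_\varepsilon$ through a short curve inside the corresponding face of $G$, we obtain a simple closed curve that meets $G$ exactly in the $V(G)$-vertices of $C_\varepsilon$, which is a noose $N_e$. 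Planar duality then tells us that $C_\varepsilon$ separates $F(R_G) = E(G)$ into exactly the two sides of $T^*$ produced by deleting $\varepsilon^*$, which by the construction of $C_r$ are precisely the two leaf-sets of $C_r \setminus \{e\}$. It follows that the vertex set of $N_e$ equals $\mids(e)$ and $N_e$ realises the separation required by the sc-decomposition.

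For type (ii), the clockwise ordering of the leaves of $C_v$ imposed by line~\ref{sc_line7} is meant to reflect the cyclic order around face $v$ of $R_G$ of the $T^*$-edges incident to $v$, with the self-leaf $\{v\}$ inserted at the position representing the interior of that face. An internal edge of the embedded ternary tree $C_v$ therefore partitions the leaves into two contiguous arcs $A$ and $B$. The associated noose $N_e$ is obtained by concatenating a short arc across the interior of face $v$ of $R_G$ (joining the two $V(G)$-vertices on the boundary of $v$ at which the split occurs) with the fundamental-cycle pieces of case~(i) coming from the $T^*$-edges in one of the two arcs. The resulting closed curve is again a simple alternating cycle in $R_G$, and becomes a noose in $G$ after the same face-traversal perturbation; one then verifies that its vertex set coincides with $\mids(e)$. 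The main obstacle is exactly this case: one has to check that the order enforced by line~\ref{sc_line7} genuinely agrees with the planar cyclic order of the boundary edges of face $v$ in $R_G$, so that every internal split of $C_v$ is topologically realisable by a planar arc across face $v$. Once this compatibility is in place, the remainder of the proof reduces to planar duality together with the bipartite structure of $R_G$.
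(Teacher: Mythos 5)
Your proposal is correct and follows essentially the same route as the paper's proof: both exploit the duality between the BFS spanning tree $T_S$ of $R_G$ and the cotree $T^*$, realise the middle set of each $T^*$-edge as the fundamental (alternating) cycle of its dual edge with respect to $T_S$, and then argue that the internal edges of the embedded local ternary trees $C_v$ also yield nooses because line~\ref{sc_line7} preserves the planar cyclic order of $N(v)$. The compatibility issue you flag for the type~(ii) edges is exactly the point the paper handles by insisting the leaves of $C_v$ are embedded in the order inherited from the embedding of $T^*$.
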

\begin{proof}
Observe that the edge set $T^*$ of the dual graph of $R_G$ (the so-called medial graph) minus the dual edges of $T_S$ in Line~\ref{sc_line1} forms a tree due to the acyclicity of $T_S$. Every node of $T^*$ corresponds to an edge of $G$  and in fact, spans the edge set of $G$. 
 For turning $T^*$ into a branch-decomposition we  $a)$  bijectively map the leafs of $T^*$ to the edges of $G$ and $b)$ make $T^*$ ternary. For $a)$ we generate for every node $v$  in $T^*$  from Line~\ref{sc_line2}--\ref{sc_line7} one ternary tree (or single edge tree), a \emph{local} tree $C_v$ with $v$ one leaf.  In Line~\ref{sc_line8}--\ref{sc_line12} those local trees are merged from the leaves of $T^*$ to its root in post order such that each $C_v$ contributes to leaf $v$ in the overall ternary tree $C_r$. 
 We show now that the such formed branch decomposition actually obeys our definition of an sc-decomposition, that is the vertices in each middle set form a cycle in the radial graph. Note that every edge of $G$ forms a $4$-cycle in $R_G$. Let $e^* \in E(T^*)$ be the dual edge of edge $e= \{f,g\} \in E(R_G \setminus T_S)$. Then the union of $e$ and  the path through $T_S$ from $f$ over the lowest common ancestor of $f,g$ in $T_S$ to $g$ forms a cycle in $R_G$ that separates the two subtrees of $T^*$ that are separated by $e^*$. Thus, $T^*$ already possesses middle sets that form  nooses in $G$. However $T^*$ is not ternary since it may have maximum node degree $4$. The leaves of each local tree $C_v$ in Line~\ref{sc_line6} are embedded in  the same order as the inverse of their labels, the neighboring nodes of $v$ in $T^*$, 
 and thus we keep the same ordering in the overall ternary tree $C_r$. 
  Every edge $e$ of $C_r$ comes from an edge of one of the local trees $C_v$, and $e$ separates the neighbors $N_1(v)$  from $N_2(v)$ where the disjoint union $N_1(v),N_2(v)$ form the neighborhood $N(v)$ of $v$ in $T^*$. Like this, $T^*$ falls apart into two subtrees each bounded by a cycle in $R_G$ formed similarly as above by the union of $N_i(v)$ bounding minimal path in  $T_S$ and the path through the edges of $T^*$ induced by $N_i[v]$, for  $i=1$ and $i=2$ respectively.


\end{proof}




\end{document}